\documentclass[acmtog, nonacm]{acmart}
\AtBeginDocument{%
  }

\setcopyright{none}
\acmDOI{XXXXXXX.XXXXXXX} 

\usepackage{algorithm}
\usepackage{algpseudocode}
\usepackage{pifont}
\usepackage{caption}
\usepackage{subcaption}
\newcommand{\cmark}{\ding{51}}
\newcommand{\xmark}{\ding{55}}
\usepackage{xcolor}

\begin{document}

\title{Deep Accurate Solver for the Geodesic Problem}

\author{Saar Huberman}
\email{saarhuberman@cs.technion.ac.il}
\affiliation{%
  \institution{Technion - Israel Institute of Technology}
  \city{Haifa}
  \country{Israel}
}

\author{Amit Bracha}
\email{amit.bracha@cs.technion.ac.il}
\affiliation{%
  \institution{Technion - Israel Institute of Technology}
  \city{Haifa}
  \country{Israel}}

\author{Ron Kimmel}
\email{ron@cs.technion.ac.il}
\affiliation{%
  \institution{Technion - Israel Institute of Technology}
  \city{Haifa}
  \country{Israel}}

\renewcommand{\shortauthors}{Huberman et al.}

\begin{abstract}
A common approach to compute distances on continuous surfaces is by considering a discretized polygonal mesh approximating the surface and estimating distances on the polygon. 
We show that exact geodesic distances restricted to the polygon are at most second-order accurate with respect to the distances on the corresponding continuous surface. 
By {\it order of accuracy} we refer to the convergence rate as a function of the average distance between sampled points. 
Next, a higher-order accurate deep learning method for computing geodesic distances on surfaces is introduced.
Traditionally, one considers two main components when computing distances on surfaces: a numerical solver that locally approximates the distance function, and an efficient causal ordering scheme by which surface points are updated. 
Classical minimal path methods often exploit a dynamic programming principle with quasi-linear computational complexity in the number of sampled points.
The quality of the distance approximation is determined by the local solver that is revisited in this paper. 
To improve state of the art accuracy, we consider a neural network-based local solver which implicitly approximates the structure of the continuous surface.
We supply numerical evidence that the proposed learned update scheme provides better accuracy compared to the best possible polyhedral approximations and previous learning-based methods. 
The result is a third-order accurate solver with a bootstrapping-recipe for further improvement.
\end{abstract}





  \begin{teaserfigure}
    \begin{center}
    \includegraphics[width=1\textwidth]{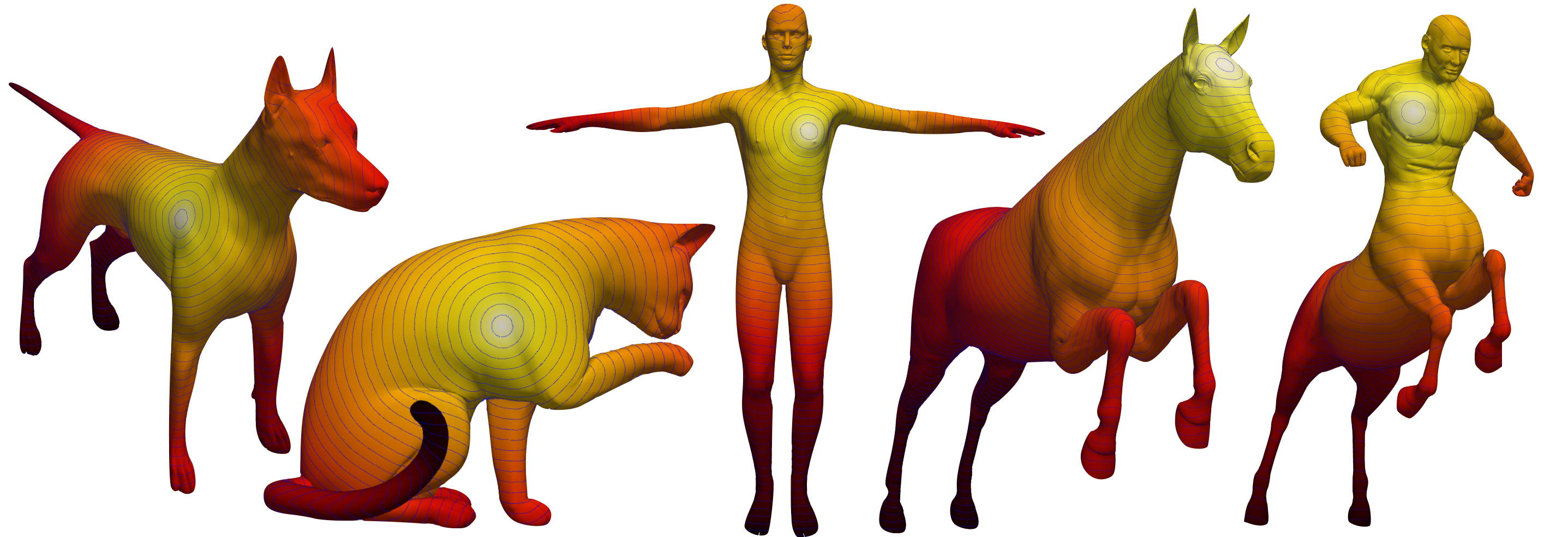}
    \caption{Geodesic distances from a single  point on each surface are presented as colors (white to red) and level sets painted on the surface. 
    The proposed method produces substantially more accurate than state of the art distance maps while operating in quasi-linear time.
    The evaluation was done on shapes from the TOSCA dataset \cite{bronstein2008numerical}, whereas our solver was trained with a small number of $2^{nd}$ order polynomial surfaces.
    }
    \end{center}
    \label{fig:contours}
    \Description{Figure 1. Geodesic distance computed by our method is shown on shapes from the Tosca dataset as distance maps and iso-contours. Our method computes highly accurate distances.}
  \end{teaserfigure}
\maketitle

\section{Introduction}

A geodesic distance is defined as the length of the shortest path connecting two points on a surface. 
It can be considered as a generalization of the Euclidean distance to curved manifolds.
The approximation of geodesic distances is used as a building block in many applications.
It can be found in robot navigation \cite{kimmel1998multivalued, kimmel2001optimal}, and shape matching \cite{ion20083d, elad2001bending, shamai2017geodesic, panozzo2013weighted}, to name just a few examples.
Thus, for effective and reliable use, computation of geodesics is expected to be both {\em fast} and {\em accurate}.

Over the years, many methods have been proposed for computing distances on polygonal meshes that compromise between the accuracy of the approximated distance and the complexity of the algorithm.
One family of algorithms for computing distances on polyhedral surfaces is based on solutions to the exact discrete geodesic problem introduced by Mitchell \textit{et al.} \cite{mitchell1987discrete}.  
The problem is defined as finding exact distances on a polyhedron. 
Mitchell\textit{ et al.} proposed an $\mathcal O(N^2 \log(N))$ complexity algorithm, where $N$ is the number of vertices of the polyhedral surface.
That algorithm, known as MMP, is computationally demanding and challenging to code, with the first implementation by Surazhsky \textit{et al.} \cite{surazhsky2005fast} introduced eighteen years after its publication.
At the other end, a popular family of methods for efficient approximation of distances known as {\it fast marching}, involves quasi-linear computational complexity $\mathcal O(N\log(N))$.
These methods are based on the {\it viscosity} solution of an {\it eikonal equation}. 
The fast marching method involve two main components, a heap sorting strategy and a local numerical solver, often referred to as a numerical update procedure.
The fast marching, originally introduced for regularly sampled grids \cite{sethian1996fast, tsitsiklis1995efficient}, was extended to triangulated surfaces in \cite{kimmel1998computing}.
While operating on curved surfaces approximated by triangulated meshes, the nearest neighbors of a vertex in the mesh are used to locally approximate a function whose gradient magnitude is equal to one.
This formulation of the unknown function is known as an eikonal equation. 
The resulting numerical algorithm yields a {\it first-order-accurate scheme} in terms of an average triangle's edge length $h$, denoted by $\mathcal O(h)$.

We start our journey by proving that the exact geodesic distances computed on a polygonal mesh approximating a continuous surface would be at most a second order approximation of the corresponding distances on the surface.
To overcome the second order accuracy limitation, we extend the numerical support about each vertex beyond the classical one ring approximation, and utilize the universal approximation properties of neural networks. 
The low complexity of the well-known dynamic programming update scheme \cite{dijkstra1959note}, combined with our novel neural network-based solver, yields an {\it efficient} and {\it accurate} method. 

In a related previous effort a neural network based local solver for the computation of geodesic distances was proposed \cite{lichtenstein2019deep}. 
We improve Lichtenstein's $\mathcal O(h^2)$ approach by extending the local neighborhood numerical support, and refining the network's architecture to obtain $\mathcal O(h^3)$ accuracy at similar linear complexity $\mathcal O(N\log(N))$.
In fact, directly extending the local support in \cite{lichtenstein2019deep} to 3rd ring neighborhood, does not improve the accuracy of the method.
It appears as if 90\% of that model's latent space is inactive. 
Based on these observations we propose to add hidden layers, change the activation functions, and reduce the size of the latent space. 

Similarly to \cite{lichtenstein2019deep}, the suggested solver is trained in a supervised manner using ground truth examples. 
And yet, since geodesics can not be derived analytically except for a limited set of surfaces like spheres and planes, we propose a multi-hierarchy bootstrapping technique. 
Namely, we use distance approximations on high resolution sampled meshes to better approximate distances at low resolutions.
We utilize our ability to apply given solvers at high resolution to generate higher order accurate training examples at low resolutions. 

\subsection{Contributions} 
We show that exact geodesics on polyhedrons are second order approximations. 
As a remedy, we develop a {\it fast} and {\it accurate} geodesic distance approximation method on surfaces.
\begin{itemize}
  \item 
  For fast computation, we use a distance update scheme (Algorithm \ref{alg:1}) that guarantees quasi-linear computational complexity. 
 \item 
 For accuracy, by revisiting the ingredients of the solver suggested in \cite{lichtenstein2019deep}, we propose a network based solver that operates directly on the sampled mesh vertices.
 \item 
 To provide accurate ground truth distances required for training our solver, we propose a novel data generation bootstrapping procedure. 
 \end{itemize}

\section{Related efforts}
Given a domain $\Omega \subset \mathbb{R}^n$ and a curve $\Gamma \in \Omega$, the predominant approach for generating distance functions from the curve $\Gamma$ to all other points in $\Omega$, is to find a function $\phi: \Omega \rightarrow \mathbb{R}$ which satisfies the {\it eikonal equation},
\begin{eqnarray}
\label{eq: eikonal equation}
| \nabla \phi(x) | &=& 1, \,\,\,\,\,\,\, x \in \Omega \setminus \Gamma \cr 
\phi(x) &=& 0, \,\,\,\,\,\,\, x \in \Gamma \,.
\end{eqnarray}
Due to the non-linearity and hyperbolicity of this partial differential equation (PDE), solving it directly is a challenge. 
Common solvers sample the continuous domain and approximate the solution on the corresponding discretized domain while being consistent with {\it viscosity} solutions \cite{crandall1983viscosity, rouy1992viscosity}.

\subsection{Fast Eikonal solvers}
In \cite{sethian1996fast, tsitsiklis1995efficient}, quasi-linear algorithms for approximating distances on regularly sampled grids were introduced.
These algorithms involve $\mathcal O(N\log(N))$ complexity, where $N$ is the number of points on the grid.
For example, the {\it fast marching algorithm} consists of two main parts, a numerical solver that locally estimates the distance function, by approximating a solution of an eikonal equation, and an ordering scheme that determines which points are visited at each iteration. 
Over the years, more sophisticated local solvers have been developed which utilize wide local support and lead to second \cite{sethian1999level} and third \cite{ahmed2011third} order accurate methods.
In \cite{kimmel1998computing}, Kimmel \& Sethian extended the fast-marching scheme to approximate first-order 
accurate geodesic distances on triangulated surfaces.
Another prominent class of numerical solvers is the fast sweeping methods \cite{kimmel2006method, zhao2005fast, li2008second, weber2008parallel}, iterative schemes that use alternating sweeping ordering.
These methods have a worst case complexity of $\mathcal O(N^2)$ \cite{hysing2005eikonal}.


\subsection{Geodesics in heat}
Instead of directly solving the hyperbolic eikonal equation, 
the heat method presented in \cite{crane2013geodesics} solves a linear elliptic PDE for heat transport. 
This method proposed a Poisson solver that aligns the unit gradients of a distance function with the gradient directions of a short time heat kernel, ending up with a first order accurate solver.
The distance functions produced with the heat method are smoother than those generated with fast marching and are less accurate near the discontinuities of the distance functions. 


\subsection{Window propagation methods} 
Trying to solve the discrete geodesic problem,  Mitchell et al. \cite{mitchell1987discrete} proposed a $\mathcal O(N^2 \log(N))$ complexity algorithm known as MMP. 
This algorithm was the first method introduced for computing exact distances on non-convex triangulated surfaces.
In its original formulation, it calculates the distance from a single source to all other points on the polygonal mesh.
The main idea of this algorithm is to track groups of shortest paths that can be atomically parameterized.
This is achieved by dividing each mesh edge into a set of intervals, which are referred to as \textit{windows}.
This quadratic algorithm is computationally demanding and challenging to code. 
In fact, the first implementation was introduced 18 years after its publication, by Surazhsky et al. \cite{surazhsky2005fast}. 
To reduce the computational complexity at the expense of accuracy, the framework presented in \cite{surazhsky2005fast} also supports approximated geodesics by merging windows during the update step.
Over the years, many improvements to the exact geodesic scheme were suggested \cite{xu2015fast, ying2014parallel, trettner2021geodesic}.
For example, the Vertex-oriented Triangle Propagation (VTP) algorithm \cite{qin2016fast}, which by sorting out superfluous windows is considered to be one of the fastest exact geodesics algorithms on polyhedral surfaces.

\subsection{Deep learning based methods}
A number of recent papers exploited neural networks approximation capabilities to numerically solve PDEs  \cite{greenfeld2019learning, hsieh2019learning, bin2021pinneik}.
Similar to our strategy, Lichtenstein et al. \cite{lichtenstein2019deep} proposed a deep-learning based method for geodesic distance approximation which we refer to as LPK.
It uses a heap sort ordering scheme while introducing a neural network based local solver.
For each distance evaluation of a target point $p$, a local neighborhood is obtained as input to the solver.
This neighborhood consists of all vertices connected to $p$ by a path with at most $2$ edges, which we refer to as second-ring neighborhood.
The proposed method showed second-order accuracy, similar to the exact geodesic method \cite{mitchell1987discrete}, while operating at quasi-linear $\mathcal O(N\log(N))$ computational complexity. 

\subsection{Curve shortening methods}
A geodesic is 
defined as the path connecting two points on a surface, which is characterized by having zero
geodesic curvature at each point along the curve.
Between two points on a surface, there can be multiple geodesic paths with different corresponding distances.
The geodesic distance is defined as the length of the minimal geodesic connecting each point on the surface to some source points at which the distance is defined to be zero.
A geodesic path can be extracted by starting with an arbitrary path on the surface and applying a length shortening method. 
Kimmel \& Sapiro \cite{kimmel1995shortening} presented a curve shortening flow that fixes the two endpoints of the curve at each iteration and minimize the geodesic curvature.
Sharp \& Crane \cite{sharp2020you} presented a curve shortening method for triangulated meshes by considering paths constrained to the mesh edges and applying intrinsic edge flips.
These local refinement methods converge an initial guess into a geodesic which is not necessarily the minimal one.
\begin{figure*}[htbp]
\begin{center}
\includegraphics[width=0.9\linewidth]{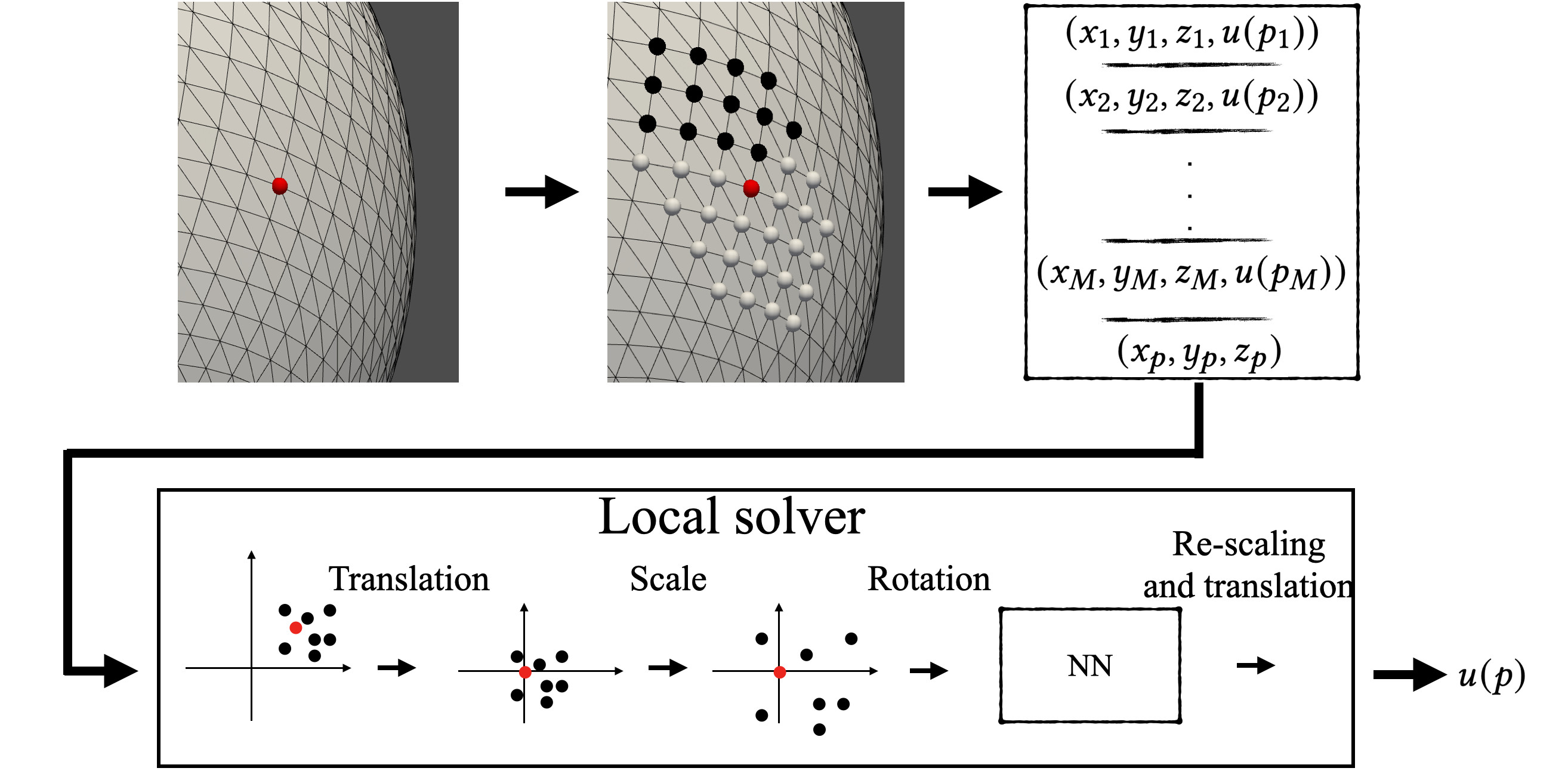}
\end{center}
  \caption{
  The proposed local solver pipeline described in Section \ref{local_Solver_section}.
  For each point in the {\it Wavefront}, the {\it Visited } points from from its local neighbourhood are chosen, and their coordinates and distance function are transformed into their canonical representation with respect to the target point. 
  The processed input is fed into the neural network and the output is further processed to revert the centering and scaling transformations.
  }
\label{fig:solver pipeline}
\Description{Figure 2. illustration of the pipeline of our local solver. Fully described in the text.}
\end{figure*}
\section{Exact distances on polyhedral surfaces}
\begin{theorem}
\label{thm:exact_is_2nd}
Let ${\mathcal M}:\Omega\in \mathbb{R}^2 \rightarrow \mathbb{R}^3$ be a Riemannian two dimensional manifold with effective Gaussian curvature a.e.
Let $C(s):[0,L]\rightarrow {\mathcal M}$ be a minimal geodesic connecting two surface points $C(0)$ and $C(L)$ on $\mathcal M$ with arclength parametrization $s$, and $L$ the length of $C$.
The length of $C$  differs by $\mathcal{O}(h^2)$ from the sum of the lengths of the cords defined by sampling the curve at equal distances (in the Euclidean embedding spaces) between the points. 
These line segments,  of length $h$ each as measured in $\mathbb{R}^3$, are defined by a sequence of surface points $C(s_{i})$ and $C(s_{i+1})$. 
That is, the length of the approximation $\gamma$ defined by its vertices
$ \{C(0),C(s_2),\ldots,C(s_{n-1}),C(L)\}$, given by
\begin{eqnarray}
L(\gamma) &=&  \sum_{i=1}^{n-1} \|C(s_{i+1})-C(s_i)\|_{\mathbb{R}^3}  
\,=\,nh, \,
\end{eqnarray}
 differs by $\mathcal{O}(h^2)$ from 
 \begin{eqnarray}
L(C) &=& \int_0^L  ds.
\end{eqnarray}
\end{theorem}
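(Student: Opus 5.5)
The idea is to compare each chord with the corresponding arc of $C$ using a Taylor expansion in arclength. Since $C$ is a geodesic on $\mathcal M$ parametrized by arclength, $\|C'(s)\|=1$. Its acceleration $C''(s)$ is normal to the surface, with magnitude equal to the normal curvature $|\kappa_n(s)|$ in the direction $C'(s)$. The normal curvature is bounded by the principal curvatures, and we assume these are bounded: the curvature is effective and bounded a.e., and $\mathcal M$ is smooth enough, say $C^3$, along $C$. Write $\kappa=\sup_s\|C''(s)\|$. Let $\Delta_i=s_{i+1}-s_i$ be the arclength of the $i$-th piece, so that $L(C)=\sum_i\Delta_i$.

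\textbf{Step 1 (local chord--arc estimate).} Expand to third order:
\[
C(s_i+\Delta)-C(s_i)=\Delta\,C'(s_i)+\tfrac{\Delta^2}{2}C''(s_i)+\tfrac{\Delta^3}{6}C'''(\xi).
\]
Because $\langle C',C''\rangle=0$ for an arclength parametrization,
\[
\|C(s_i+\Delta)-C(s_i)\|^2=\Delta^2-\tfrac{1}{12}\|C''(s_i)\|^2\Delta^4+\mathcal O(\Delta^5).
\]
In computing this, one uses $\langle C',C'''\rangle=-\|C''\|^2$, which follows from differentiating $\langle C',C''\rangle=0$. Taking square roots gives
\[
\Delta_i-h=\tfrac{1}{24}\|C''(s_i)\|^2\Delta_i^3+\mathcal O(\Delta_i^4),
\]
where the chord length is $h$ by construction. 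In particular $\Delta_i=h+\mathcal O(h^3)$. One small point to check is that the equal-chord sampling exists and that $\Delta_i\to0$ uniformly with $h$. This follows from the expansion: the chord length is a strictly increasing function of $\Delta$ for small $\Delta$, so the sample points can be chosen one after another. The last segment may be shorter than $h$, but it contributes an error of at most $\mathcal O(h^3)$ and can be absorbed.

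\textbf{Step 2 (summation).} Summing over $i$, and using $n\approx L/h$,
\[
L(C)-L(\gamma)=\sum_{i}(\Delta_i-h)\le n\,\tfrac{\kappa^2}{24}h^3+\mathcal O(nh^4)=\tfrac{\kappa^2 L}{24}h^2+\mathcal O(h^3).
\]
The error is nonnegative, since a chord is never longer than its arc, and it is $\mathcal O(h^2)$. More precisely, the sum is a Riemann sum, which gives
\[
L(C)-L(\gamma)=\tfrac{h^2}{24}\int_0^L\kappa_n(s)^2\,ds+o(h^2).
\]
This shows the rate is exactly second order, and not better, whenever the normal curvature along the geodesic is not identically zero. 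That is the intended reading of ``at most second order'' for surfaces with nonvanishing curvature.

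\textbf{Main obstacle.} The hardest part is the regularity bookkeeping. The curvature hypothesis holds only a.e., so the uniform constant in the $\mathcal O(\Delta^4)$ remainder has to come from a bound on $C'''$, or equivalently from bounded derivatives of the second fundamental form. If only $C^2$ regularity is available, I would replace the third-order Taylor expansion with the integral form $C(s_i+\Delta)-C(s_i)=\int_{s_i}^{s_i+\Delta}C'$. Writing $C'(t)=C'(s_i)+\int_{s_i}^{t}C''$ and bounding $\|C'(t)-C'(s_i)\|\le\kappa|t-s_i|$ still gives
\[
\Delta-\|C(s_i+\Delta)-C(s_i)\|\le\tfrac{1}{2}\int_{s_i}^{s_i+\Delta}\|C'(t)-C'(s_i)\|^2\,dt=\mathcal O(\kappa^2\Delta^3).
\]
This bound is enough for the $\mathcal O(h^2)$ upper bound. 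The exact leading constant, and hence the sharpness of the rate, does need the extra smoothness.
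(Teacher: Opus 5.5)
Your proof is correct, and its skeleton is the paper's: each geodesic arc exceeds its chord by $\mathcal O(h^3)$, and summing over $n\approx L/h$ chords gives $\mathcal O(h^2)$. The local estimate is obtained differently.

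\textbf{How the local estimate differs.} The paper reparametrizes each piece by the arclength $t\in[-h/2,h/2]$ of its chord. It Taylor-expands the speed $|C_t|$ about the midpoint and integrates, so the linear term vanishes by symmetry. You instead expand the position in true arclength from the left endpoint and compute the squared chord, letting $\langle C',C''\rangle=0$ and $\langle C',C'''\rangle=-\|C''\|^2$ do the work.

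\textbf{What your route buys.} It gives the correct constant. Your per-segment excess $\frac{1}{24}\kappa_n^2h^3$ reproduces the paper's own circle computation exactly, since $\frac{2\pi}{24}(2\pi/N)^2=\pi^3/(3N^2)$. The paper treats $\langle C_t,C_{tt}\rangle/|C_t|^3$ as the curvature. That quantity is actually a parametrization-dependent rate of change of speed; it vanishes identically under arclength parametrization. The paper then arrives at $(\kappa_s+2\kappa^2)\frac{h^3}{24}$, which for the unit circle predicts twice the true error. Only its order of magnitude survives, and that is all the $\mathcal O(h^2)$ upper bound uses.

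Your argument also gives two things the paper lacks. First, the error is nonnegative. Second, the Riemann sum $\frac{h^2}{24}\int_0^L\kappa_n^2\,ds$ gives a matching lower bound. That lower bound is what the ``at most second order'' claim actually needs, and the paper only exhibits it for the circle.

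\textbf{Your fallback argument.} It fits the a.e.\ curvature hypothesis better than the paper's expansion, which needs $\kappa_s$ and higher derivatives to exist. Your inequality follows from $\|V\|\ge\langle V,C'(s_i)\rangle$ for the chord vector $V$, together with $\langle a,b\rangle=1-\frac12\|a-b\|^2$ for unit vectors. It only needs a Lipschitz tangent ($C^{1,1}$), not $C^2$.

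\textbf{One small repair.} A single-point Lagrange remainder $C'''(\xi)$ is not valid for vector-valued $C$; use the integral remainder instead. Also, with only $C^3$ regularity the squared-chord remainder is $o(\Delta^4)$ rather than $\mathcal O(\Delta^5)$. That weaker remainder still suffices for both the bound and the asymptotic constant.
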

\begin{proof}[Proof of Theorem \ref{thm:exact_is_2nd}]
 Consider the length parameterization along the line segment with end points $C(s_i)$ and $C(s_{i+1})$ be given by $t \in [-h/2,h/2]$, and assume w.l.o.g. the monotone increasing reparametrization $s(t)$ that would allow us to parametrize the surface geodesic segment between $C(s_i)$ and $C(s_{i+1})$.
 As $t$ is the arclength along the cord connecting the two end points of the line segment, by freedom of parametrization, we could choose $|C_t(0)| = 1$. 
 
 Next, let us compute the length of $C(t)$ in the $i$th interval,
 \begin{eqnarray}
  L_{\mathcal M}(C(s_i),C(s_{i+1})) & =&
  \int_{s_i}^{s_{i+1}} ds 
  \, =\, \int_{-h/2}^{h/2} |C_t| dt. \,
\end{eqnarray}
 Let us expand $|C_t|$ about $0$, by which we have 
{\small
\begin{eqnarray}
 |C_t(t)| &=& |C_t(0)|+t \left ( \frac{d}{dt}|C_t| \right )(0) + \frac{t^2}{2} \left ( \frac{d^2}{dt^2}|C_t|\right )(0) + \cdots
 \cr &=& 
 |C_t(0)|+t \left ( \frac{\langle C_t, C_{tt}\rangle }{|C_t|} \right )(0) + \frac{t^2}{2} \left ( \frac{d}{dt}\frac{\langle C_t, C_{tt}\rangle }{|C_t|}\right )(0) \cr 
 && +
 \cdots
 \end{eqnarray}
} 
 Let us focus on the second  term,
\begin{eqnarray}
  \frac{\langle C_t, C_{tt}\rangle}{|C_t|}& =&
    \frac{\langle C_t, C_{tt}\rangle}{|C_t|^3}|C_t|^2
\,\,    = \,\, \kappa |C_t|^2 ,
\end{eqnarray}
 where $\kappa$ 
  is the curvature (normal curvature for a geodesic) of $C$ at that point.  
 The third term is given by
 \small{
 \begin{eqnarray}
  \frac{d}{dt}  \kappa |C_t|^2  &=& 
  \kappa_t |C_t|^2 + 2\kappa\langle C_t,C_{tt}\rangle 
  \,\,=\,\, \kappa_s |C_t|^3 + 2\kappa^2 |C_t|^3.
\end{eqnarray}
}
 
 We conclude with  
 \begin{eqnarray}
  L_i &=&\int_{s_i}^{s_{i+1}} ds 
  \, =\, \int_{-h/2}^{h/2} |C_t| dt \cr
  &=&
\int_{-h/2}^{h/2} \Big( |C_t(0)|+t \left ( \kappa |C_t|^2 \right )(0) \cr
  && +
\frac{t^2}{2}\left ( \kappa_s |C_t|^3 + 2\kappa^2 |C_t|^3 \right )(0) + \cdots \Big) dt \cr
&=& |C_t(0)|h +  \left ( \kappa_s |C_t|^3 + 2\kappa^2 |C_t|^3\right )(0)\frac{h^3}{24}+ \mathcal{O}(h^5).
 \end{eqnarray}
 With our specific selection of $|C_t(0)| =  1$,
 we conclude with the overall error given by
 {\small 
\begin{eqnarray}
\mathcal{E}rr \,= \, \sum_{i=1}^{n-1} |L_i - h| 
  \, =\,  \sum_{i=1}^{n-1} \left |   (\kappa_s+2\kappa^2) \frac{h^3}{24} + \mathcal{O}(h^5)\right |
  \,=\, \mathcal{O}(h^3) \mathcal{O}(n), \, 
\end{eqnarray}
}
 where $\kappa$ and $\kappa_s$ are evaluated at $t=0$ for each segment. 
Note, that  $\kappa$ and $\kappa_s$ are geometric quantities and thus could be regarded as effective bounded constants.
Then, assuming $h \approx \mathcal{O}(n^{-1})$ we have the convergence rate to be $\mathcal{O}(h^2)$.
\end{proof}


\subsection{Example: Circle  perimeter length estimator} 
Assume we would like to approximate the length of a circumference of a circle with radius $1$ in the plane using a regular polygon with $N$ vertices. 
Let $\theta = \frac{2\pi}{N}$ be the angle of the circular sector defined between two successive sample points on the circle.
The distance between these points is given by 
$h = 2 \sin\left (\frac {\theta}{2}\right )$.
The circumference of the circle is known to be $2\pi$, while the length approximated by the polygon is $N h = 2N \sin\left (\frac{\pi}{N}\right )$.
The truncation error is then given by
\begin{eqnarray}
 2\pi - N h &=& 2\pi - 2N \sin\left (\frac{\pi}{N}\right ) \cr
&=& 2\pi - 2N \left(\frac{\pi}{N} 
- \frac{\left( \frac{\pi}{N} \right)^3}{3!} 
+ \frac{\left( \frac{\pi}{N} \right)^5}{5!}  - \cdots \right) \cr
&=& \frac{\pi^3}{3 N^2} - \frac{\pi^5}{60 N^4} + \cdots \cr
&=& \mathcal{O}\left (N^{-2}\right ) \cr 
 & =& \mathcal{O}(h^2).
\end{eqnarray}

Note, that this analysis also provides a lower bound on the length estimation error of great circles on a sphere.
\section{Geodesics: ${\mathcal O}(h^3)$ accuracy  at ${\mathcal O}(N \log N)$ complexity}
We present a neural network based method for approximating accurate geodesic distances on surfaces. 
Similar to most dynamic programming methods, like the fast marching scheme, the proposed method consists of a numerical solver that locally approximates the distance function $u$ at a surface point $p$, and an ordering scheme that defines the order of the visited points. 
Here, the $N$ sampled surface points are divided into three disjoint sets. 
\begin{enumerate}
    \item \textit{Visited}: points where the distance function $u(p)$ has already been computed and will not be changed.
    \item \textit{Wavefront}: points where the computation of $u(p)$ is in progress and is not yet fixed.
    \item \textit{Unvisited}: points where  $u(p)$ has not yet been computed.
\end{enumerate}
\begin{algorithm}
\caption{Distance Updating Scheme}\label{alg:1}
\begin{algorithmic}[1]
\algnewcommand{\Initialize}[1]{%
  \State \textbf{Initialize:}
  \Statex \hspace*{\algorithmicindent}\parbox[t]{.8\linewidth}{\raggedright #1}
}
\algnewcommand{\Define}[1]{%
  \State \textbf{Definitions:}
  \Statex \hspace*{\algorithmicindent}\parbox[t]{.8\linewidth}{\raggedright #1}
}
\Define{$S$ - Set of all source points \\
$p$ - point on the surface\\
$u(p)$ - minimal distance from sources to $p$}
\Initialize{$u(p) = 0$, tag $p$ as {\it Visited};
\ $\forall p \in S$ \\
$u(p) = \infty$, tag $p$ as {\it Unvisited}; 
    $\forall p \not \in S$ \\
    Tag all {\it Unvisited} points adjacent to {\it Visited} points as {\it Wavefront}
    }
\Repeat
    \For{$p \in  {\mbox{\it Wavefront}}$}
        \State Approximate $u(p)$ based on {\it Visited} points
    \EndFor
    \State \parbox[t]{0.9\linewidth}{ Tag the least distant {\it Wavefront} point $p'$ as {\it Visited}}
    \State Tag all {\it Unvisited} neighbors of $p'$ as {\it Wavefront}
\Until{all points are {\it Visited}.}
\State \textbf{Return} $u$
\end{algorithmic}
\label{alg:1}
\end{algorithm}
\begin{figure*}[htbp]
\begin{center}
\includegraphics[width=0.9\linewidth]{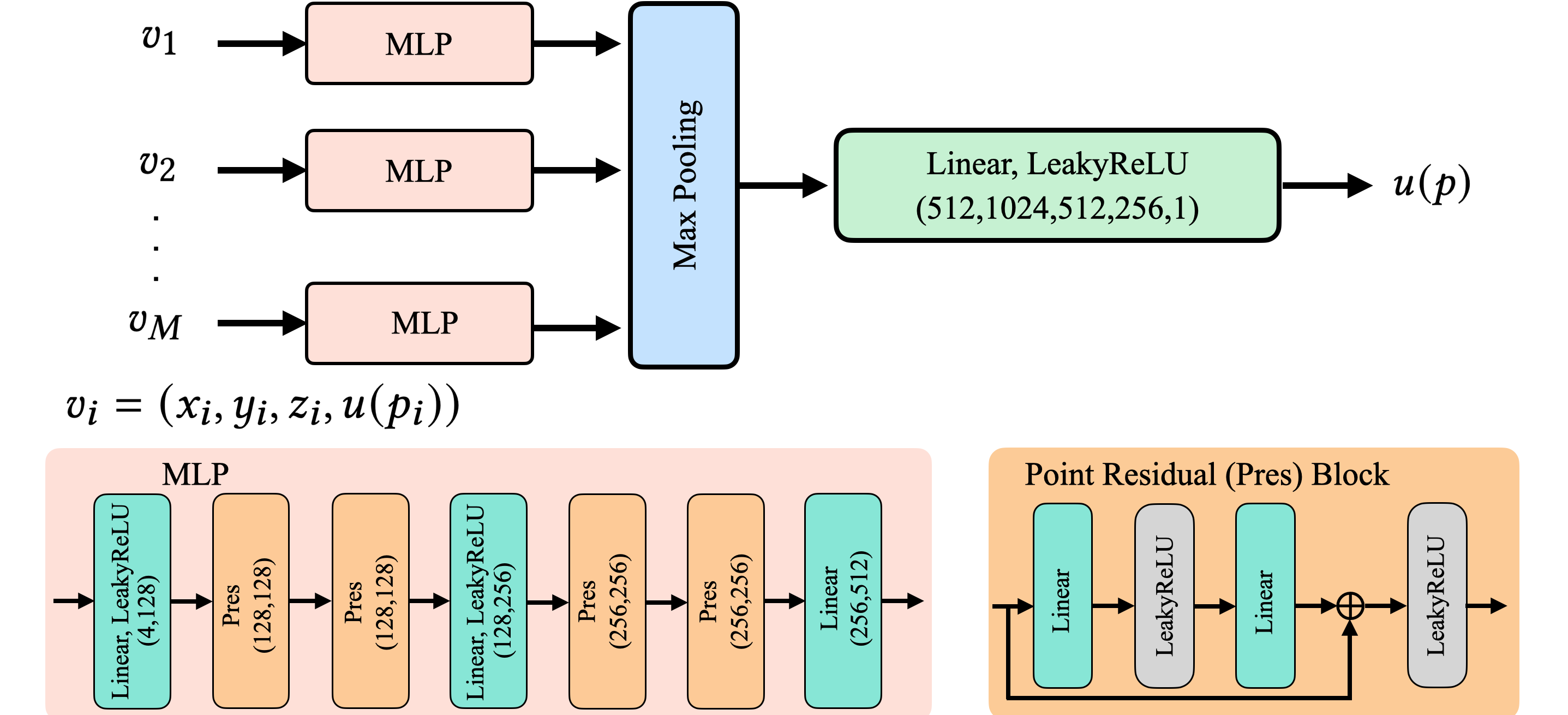}
\end{center}
  \caption{The proposed network architecture described in Section \ref{local_Solver_section}.
  The input coordinates and distance function are in their canonical form, after translation rotation and scale.
  }
\label{fig:NN schematic}
\Description{Figure 3. Illustration of the architecture of our local solver. Fully described in the text.}
\end{figure*}
The distances at the sampled surface points are computed according to Algorithm \ref{alg:1}, where Step 5 of the scheme is performed by the proposed local solver. 
When applied to a target point $p\in$ {\it Wavefront}, the local solver uses a predefined maximum number of {\it Visited} points. 
These {\it Visited} points are chosen from the local neighborhood and are not related to the number of points on the mesh; hence, a single operation of our solver has constant complexity.
Since, within our dynamic programming setting, the proposed method retains the heap sort ordering scheme, the overall computational complexity is $\mathcal O(N\log(N))$.
Subsection \ref{local_Solver_section} introduces the operation of the local solver, presents the required pre-processing and elaborates on the implementation of the neural network. 
Next, subsection \ref{train_local_solver_section} explains how the dataset is generated and the network weights are optimized. 
Finally, subsection \ref{generate_gt_section} details how ground truth distances are calculated when no analytic closed form solution is available.

\subsection{Local solver} 
\label{local_Solver_section}
Here, we introduce a local neural network-based solver
for Step 5 of Algorithm \ref{alg:1}.
When the solver is applied to a given point $p\in $ {\it Wavefront}, it receives as input the coordinates and distance function values of its neighboring points.
The neighboring points, denoted by $\mathcal N(p) = \{p_1,p_2,...,p_M\}$, are defined by all vertices connected to $p$ by a path of at most $3$ edges, which is often referred to as third ring neighborhood.
Based on the information from the {\it Visited} points in $\mathcal N(p)$, the local solver approximates the distance function $u(p)$.
This way, we keep utilizing the order of updates that characterizes the  construction of distance functions.
As mentioned earlier, for a given target point $p$ and neighboring points $\{p_i\}_{i=1}^M \subset$ {\it  Visited} $ \cap \mathcal N(p)$, the input to our solver is $\{(x_{p_i}, y_{p_i},z_{p_i}, u(p_i)\}_{i=1}^M \cup \{(x_{p}, y_{p},z_{p})\}$. 
To address the solver's generalization capability and to handle diverse possible inputs, we transform the input to the neural network into a canonical representation. 
To this end, we design a preprocessing pipeline presented in Figure \ref{fig:solver pipeline}.
The coordinates are centered with respect to the target point, resulting in relative coordinates $(x_{p_i}-x_p, y_{p_i}-y_p, z_{p_i}-z_p)$, and ${\min_j \{u(p_j)}\}$ is subtracted from the values of the distance function $\{u(p_i)\}_{i=1}^M$.
After the input is centered, it is scaled so that the mean L2 norm of the coordinates is of unit size. 
Last, a $SO(3)$ rotation matrix is applied to the coordinates so that their first moment is aligned with a predefined principal directions.
The processed input is fed into the neural network and the output is further processed to reverse the centering and scaling transformations. 

The input neighborhood has no fixed order and can be viewed as a set. 
To properly handle our unstructured set of points, we train our neural network output to be permutation invariant.
The network's architecture, presented in Figure \ref{fig:NN schematic}, consists of three main components. 
 A shared weight encoder that lifts the $4$-dimensional input to $512$ features using residual multi-layer perceptron (MLP) blocks \cite{ma2022rethinking}.
 A per-feature max pooling operation that results in a single $512$ feature vector, and 
 a fully connected regression network of dimensions $(512, 1024, 512, 256, 1)$ that outputs the desired target distance.
 \begin{figure*}[htbp]
\begin{center}
\includegraphics[width=0.9\textwidth]{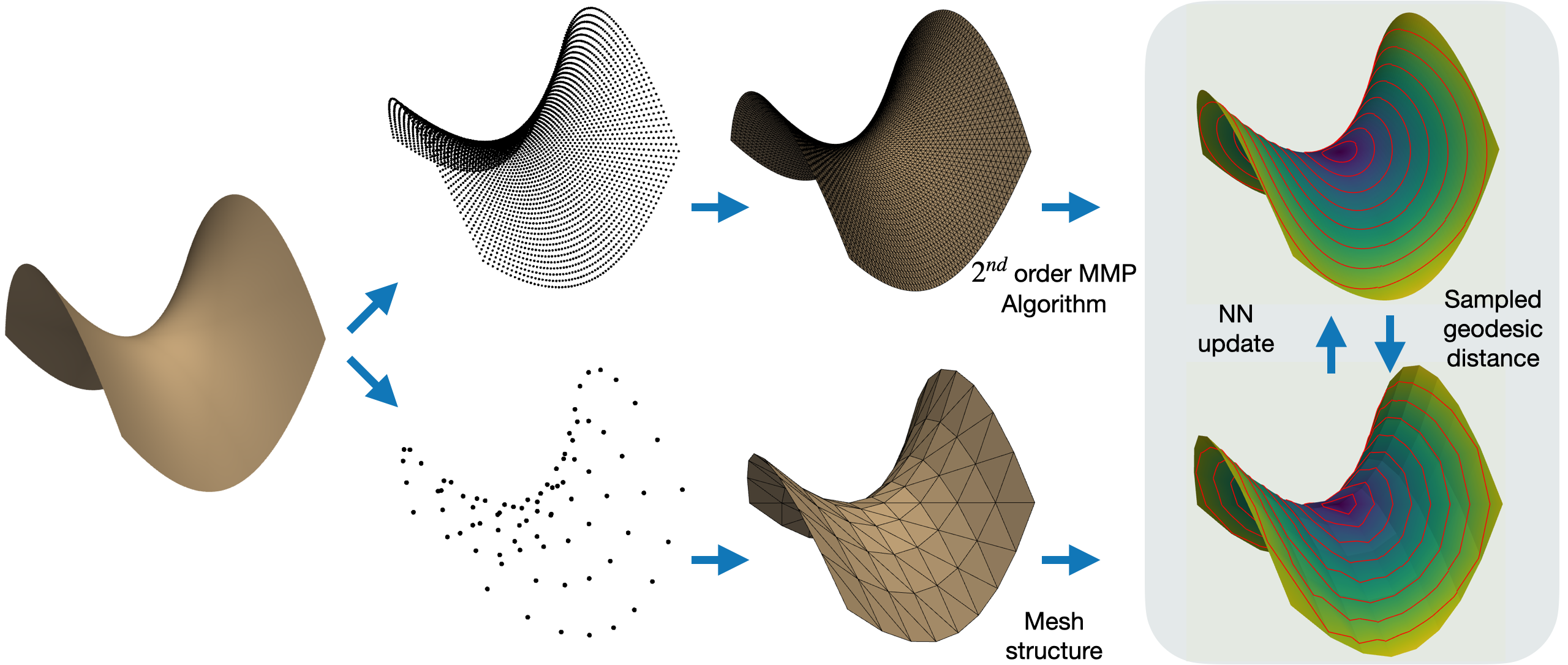}
\end{center}
  \caption{Bootstrapping by training. Distance values computed for a high $h^2$-resolution sampled mesh of a continuous surface with an $r$ accurate scheme  yields $\mathcal{O} (h^{2r})$ accurate distances given at the mesh points.
  The mesh can then be sampled into a lower $h$-resolution mesh of the same continuous surface, while keeping the corresponding $\mathcal{O} (h^{2r})$ accurate distances at the vertices.
  See text for an elaborated discussion regarding data augmentation at high resolution and training more accurate update procedures at the low resolution.
  }
\label{fig:gt_generation_scheme}
\Description{Figure 4. Present the proposed bootstrapping method used to create our training examples. Fully described in the text.}
\end{figure*}
\subsection{Training the local solver} 
\label{train_local_solver_section}
We use a customary supervised training procedure, using examples with corresponding ground-truth distances.
These ground truth distances are obtained by applying a bi-level sampling strategy, as detailed in
Section \ref{generate_gt_section}.
Given an input $\{(x_{p_i}, y_{p_i},z_{p_i}, u(p_i)\}_{i=1}^M$, our network is trained to minimize the difference between its output and its corresponding ground truth, denoted by $u_{gt}(p)$.
To develop a reliable and robust solver, we create a diverse dataset that simulates a variety of scenarios. 
We construct this dataset by selecting various source points and sampling local neighborhoods at different random positions relative to the sources.
According to the causal nature of our algorithm, we build our training examples, such that a neighboring point $p'$ is defined as {\it Visited} and is allowed to participate in the prediction of $u(p)$ if $u_{gt}(p') < u_{gt}(p)$. 
The network's parameters $\Theta$ are optimized to minimize the Mean Square Error (MSE) loss
\begin{eqnarray}
\label{eq:MSE loss}
L(\Theta) &=& \frac{1}{K}\sum_{j=1}^K{(f_{\Theta}(\{(x_{p_{i,j}}, y_{p_{i,j}},z_{p_{i,j}}, u(p_{i,j}))\}_{i=1}^M)-u_{gt}(p_j))^2},\cr
&&
\end{eqnarray}
where $K$ is the number of examples in the training set and $p_{i,j}$ corresponds to the $i^{th}$ neighbor of the target point $p_j$.
The coordinates and distances used in our training procedure are in their canonical form, after being translated, rotated and scaled, as explained in Section \ref{local_Solver_section}.
\begin{figure*}[htbp]
\begin{center}
\includegraphics[width=0.8\textwidth]{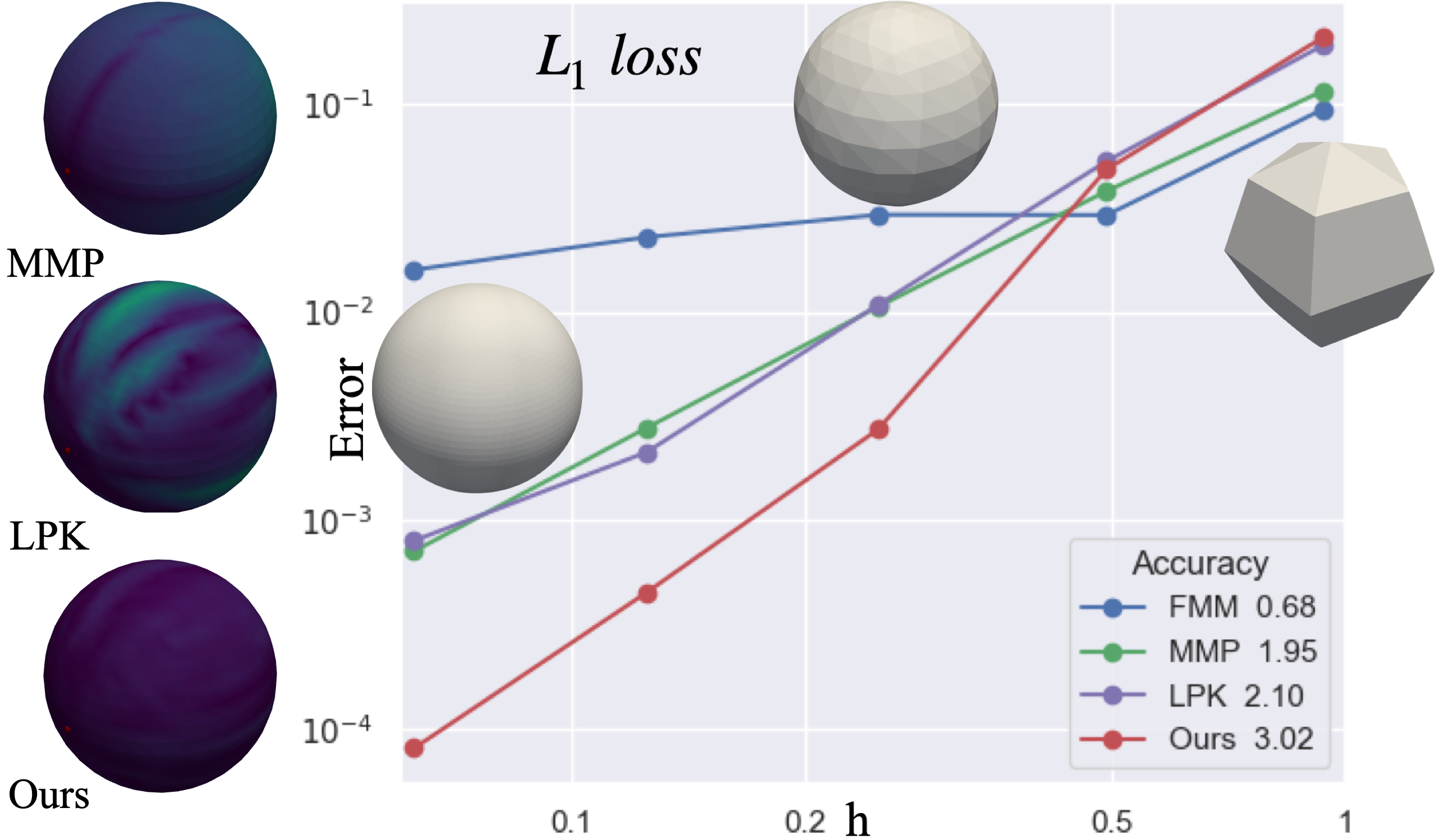}
\end{center}
  \caption{Order of accuracy: Right: Plots showing how the edge size effects the error.
     The accuracy of each scheme is defined by the corresponding slope. 
     We used mesh approximations of a unit sphere with different edge size. 
    Left: Errors presented on high resolution spheres for 2nd order methods and our presented method (when brighter colour indicates higher error).
  }
\Description{Figure 5. Numerical evaluation of the order of accuracy presented on spheres. The order of accuracy for the fast marching method is 0.68, for the MMP method 1.95, for the method presented by Lichtenstein et al. 2.1, and for our method 3.02. For high-resolution meshes with an average edge of less than 0.1, our method shows errors that are more than an order of magnitude smaller compared to all other methods}
\label{fig:spheres}
\end{figure*}
\subsection{Learning to augment}
\label{generate_gt_section}
Exact distances on continuous surfaces are given by analytic expressions for a very limited set of continuous surfaces; namely, for spheres and planes.
Since our solver is trained on examples containing ground truth distances, an additional approximation algorithm must be considered to generate our training examples for general surfaces.
Currently, the most accurate axiomatic method for distance computation is the MMP algorithm, which computes ``exact'' polyhedral distances.
Considering polyhedral surfaces as sampled continuous ones, the ``exact'' distances on triangulated surfaces are $2^{nd}$ order accurate with respect to the edge length $h$. 
Indeed, the MMP is an $\mathcal{O}(h^2)$ accurate method.
In order to train our network with more accurate than $\mathcal{O}(h^2)$ distances for general smooth surfaces, we resort to the following bootstrapping idea. 

We introduce a multi-resolution ground truth boosting generation technique that allows us to obtain ground truth distances of any desired order.
The underlying idea is that distances computed on a mesh obtained from a denser sampling of the surface are a better approximation to the distances on the continuous surface.
When generating examples from a given surface, two sampling resolutions of the surface are obtained and corresponding meshes are formed, denoted by ${\mathcal M}_{dense}$ and $ {\mathcal M}_{sparse}$, respectively. 
Distances are computed on the high-resolution mesh ${\mathcal M}_{dense}$ and the obtained distance map is sampled at ${\mathcal M}_{sparse}$.
Consider $h_{dense}, h_{sparse}$  that correspond to the mean edge length of the polygons ${\mathcal M}_{dense}, {\mathcal M}_{sparse}$, so that,
$ h_{dense}\,= \, h_{sparse}^q \,. $
The distances computed by an approximation method of order $r$ on ${\mathcal M}_{dense}$ are $r$ order accurate $\mathcal{O}(h_{dense}^r)$. 
Therefore, the same approximated distances, sampled at the corresponding vertices of ${\mathcal M}_{sparse}$, have $\mathcal{O}(h_{sparse}^{qr})$ accuracy.

Using the distance samples of the polyhedral distances obtained by the MMP algorithm while requiring $q\geq2$, allows us to generate distance maps that are at least fourth-order accurate.
By considering these approximated distances as our ground truth, training examples can be generated from ${\mathcal M}_{sparse}$ as described in Section \ref{train_local_solver_section} which allow us to properly train a third-order accurate method.
The iterative application of this process allows us to generate accurate ground truth distances to properly train solvers of arbitrary order.
For example, after training a $3^{rd}$ order solver, we can apply the same process while replacing the MMP with our new solver to generate a $\mathcal{O}(h^6)$ ground truth distances and train a solver up to $6^{th}$ order.
For a schematic representation of this technique, see Figure \ref{fig:gt_generation_scheme}.
\section{Numerical evaluation: Spheres and beyond}
Geodesic distances on spheres can be calculated analytically.
Therefore, they are well suited for the evaluation of our method.
For two given points 
$a = (x_a, y_a, z_a),\, b = (x_b, y_b, z_b)$ lying on a sphere of radius $r$, the geodesic distance between them is defined by 
\begin{eqnarray}
\label{eq:spheres geodesic}
u(a,b) &=& r \ \mbox{arccos}\left (\frac{x_a x_b + y_a y_b + z_a z_b}{r^2}\right ) \,.
\end{eqnarray}

To train our solver, we first randomly sampled spheres at different resolutions and obtained triangulated versions of them.
Using the exact distances we created a data set with 100,000 examples and applied a training procedure as presented in Section \ref{train_local_solver_section}.
To evaluate our method, we constructed a hierarchy of spheres with various resolutions, see Figure 
\ref{fig:spheres}.

As described in \cite{osher1988fronts}, we assume that the exact solution $u(a,b)$ can be written as
\begin{eqnarray}
\label{eq:solution approximation}
u(a,b) \,=\, u_h(a,b) + Ch^R + \mathcal{O}(h^{R+1}), \,
\end{eqnarray}
where $C$ is a constant, $u_h$ is defined as the approximate solution on a mesh with a corresponding mean edge length of $h$, and $R$ is the order of accuracy.
For two given mesh resolutions of the same continuous surface ${\mathcal M}_1, {\mathcal M}_2$ with corresponding $h_1, h_2$, we can estimate our method's order of accuracy by
\begin{eqnarray}
\label{eq:order of accuracy}
R &=& \log_{\frac{h_1}{h_2}}
\left(\frac{u-u_{h_1}}{u-u_{h_2}}\right ) \,.
\end{eqnarray}
The evaluation of our method is shown in Figure \ref{fig:spheres}, where the slope of each graph indicates the order of accuracy $R$.
It can be seen that our method has a higher order of accuracy than the classical fast marching (FMM), the MMP exact geodesic method, and the previous deep learning method proposed by Lichtenstein et al.
\begin{figure}[t]
\centering
\includegraphics[width=\linewidth]{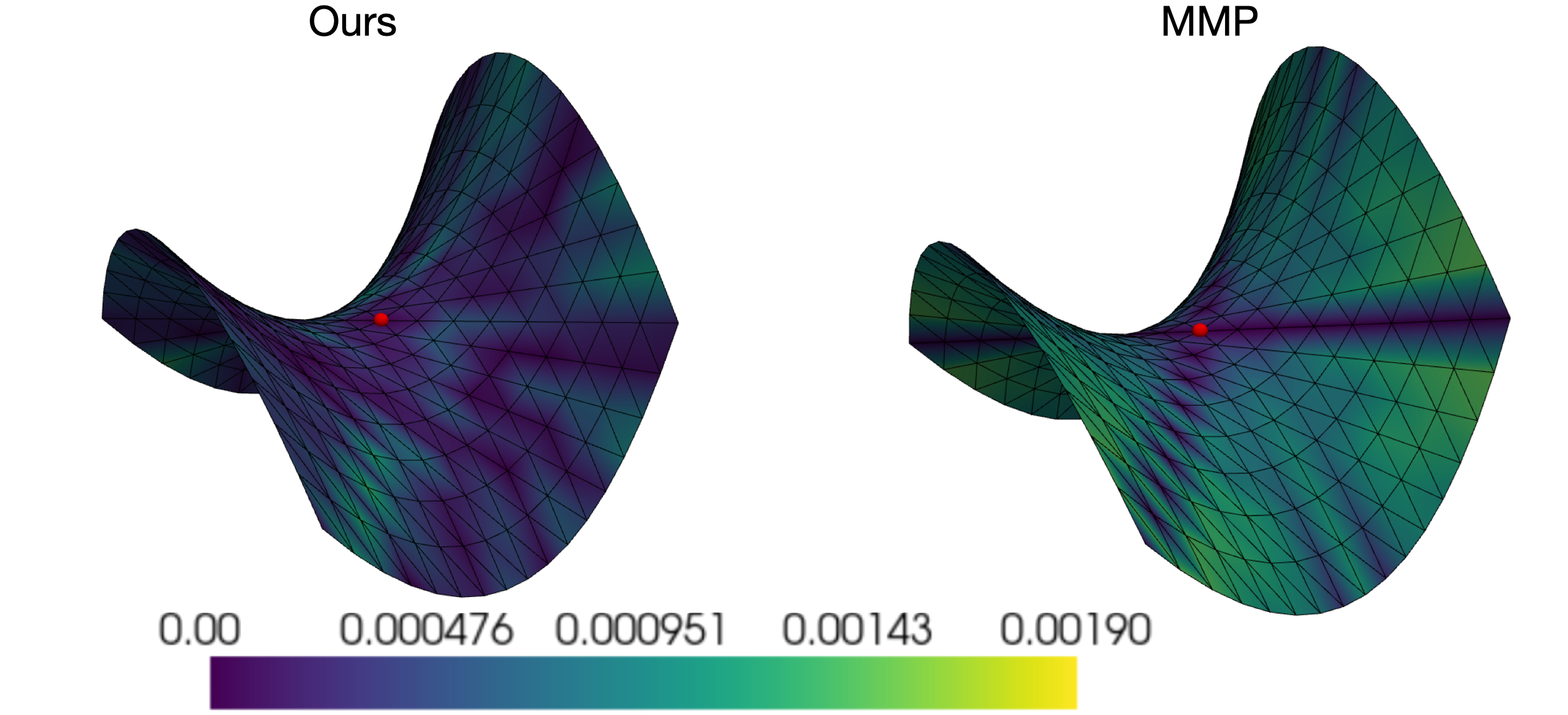} 
\caption{{Polynomial surfaces:} Errors presented for the polyhedral scheme and the proposed method. 
   Local errors, represented as colors on the surface, were computed relative to exact polyhedral distances computed at a high-resolution sampled mesh of the continuous surface, as described in Subsection \ref{generate_gt_section}.
   }
\label{fig: saddle_errors}
\Description{Figure 6. Visualized errors on hyperbolic parabolic surfaces comparing MMP methods and our method. Our method has significantly lower errors compared to MMP.}
\end{figure}
\begin{table*} 
    \centering
    \captionof{table}{{Polynomial surfaces:} Quantitative evaluation conducted on $2^{nd}$ order paraboloids. The errors were computed relative to the polyhedral distance projected from high-resolution sampled meshes, as described in Subsection \ref{generate_gt_section}.}
    \scalebox{1}{\begin{tabular}{|l|c|c|c|c|c|c|c|c|}
    \hline
    Surface & \multicolumn{4}{|c|}{$L_1$ \quad $ (\times10^{-2})$} & \multicolumn{4}{|c|}{$L_\infty$ \quad $ (\times10^{-2})$}\\\cline{2-9}
    &\quad  FMM  \quad  & \quad LPK \quad & \quad MMP \quad & \quad Ours \quad & \quad FMM \quad &  \quad LPK \quad & \quad MMP \quad & \quad Ours \quad \\
    \hline\hline\hline
    $x^2-y^2$ & 2.86 & 0.21 & 0.09 & \textbf{0.04} & 7.95 & 1.17 & 0.24 & \textbf{0.16}\\
    $x^2+y^2$ & 2.05 & 0.34 & 0.28 & \textbf{0.09} & 6.80 & 1.44 & 0.67 & \textbf{0.26}\\
    $x^2-y^2+xy$ & 2.91 & 0.33 & 0.18 & \textbf{0.09} & 6.40 & 2.60 & 0.63 & \textbf{0.28}\\
    \hline
    \end{tabular}} 
\label{table: polynomial surfaces}
\end{table*}
\begin{figure*}[htbp]
\begin{center}
\includegraphics[width=\textwidth]{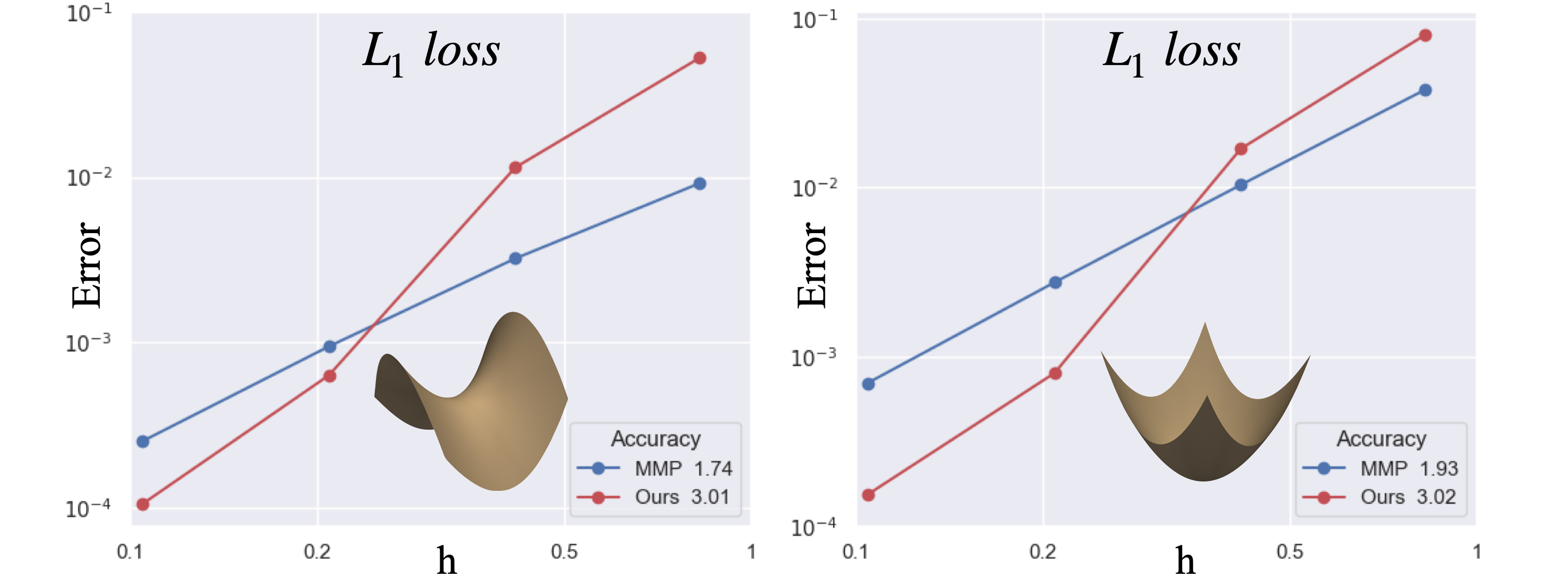}
\end{center}
\caption[Order of accuracy presented on parametric surfaces]
{{Order of accuracy on parametric surfaces:} Plots showing the effect of the edge size on the errors. 
The accuracy of each scheme is associated with its corresponding slope. 
Ground truth distances were evaluated using our bootstrapping method \ref{generate_gt_section}. 
Left: Evaluation on hyperbolic paraboloid $x^2-y^2$.
     Right: Evaluation on regular paraboloid $x^2+y^2$.
     }
\Description{Figure 7.  Numerical evaluation of the order of accuracy for a regular paraboloid and a hyperbolic paraboloid. For a regular paraboloid, the order of accuracy is 1.93 and 3.02 for the MMP method and our method, respectively. For a hyperbolic paraboloid, the accuracy is 1.74 for the MMP method and 3.01 for our method.}
\label{fig: 3rd_order_polynom}
\end{figure*}

\begin{table*}[htbp] 
    \centering
    \captionof{table}{Generalization to arbitrary surfaces: Quantitative evaluation tested on TOSCA. 
    The errors are relative to the MMP scheme. 
    Our solver and the one proposed in \cite{lichtenstein2019deep} were trained using our bootstrapping method \ref{generate_gt_section} with a limited number of $2^{nd}$ order polynomial surfaces, the three paraboloids  presented in Table \ref{table: polynomial surfaces}. 
    In addition to shapes represented as triangulated meshes, we present results for applying the proposed method to point clouds, see Section \ref{point_cloud_section}.}
    \scalebox{0.94}{\begin{tabular}{|l|c|c|c|c|c|c|c|c|c|c|}
    \hline
    Shape & \multicolumn{5}{|c|}{$L_1$ \quad $ (\times10^{-1})$} & \multicolumn{5}{|c|}{$L_\infty$ \quad $ (\times10^{-1})$}\\\cline{2-11}
    & \quad Heat \quad & \quad FMM \quad & \quad LPK \quad & \quad Ours \quad & \quad Ours - Point clouds \quad & \quad Heat \quad & \quad FMM \quad& \quad LPK \quad & \quad Ours \quad & \quad Ours - Point clouds \quad \\
    \hline\hline\hline
    Dog & 0.728 & 0.110 & 0.123 & \textbf{0.037} & 0.072 & 8.688 & 1.514 & 1.318 & \textbf{0.465} & 2.968\\
    Cat & 1.260 & 0.101 & 0.278 & \textbf{0.043} & 0.045 & 5.379 & 0.735 & 3.949 & \textbf{0.474} & 0.863\\
    Wolf & 0.440 & 0.162 & 0.169 & \textbf{0.072} & 0.092 & 2.244 & 1.009 & 1.343 & \textbf{0.422} & 0.598\\
    Horse & 0.796 & 0.129 & 0.256 & \textbf{0.043} & 0.089 & 3.331 & 1.687 & 2.761 & \textbf{0.602} & 3.502\\
    Michael & 3.142 & 0.089 & 1.170 & \textbf{0.067} & 0.081 & 8.525 & 0.506 & 3.800 & \textbf{0.418} & 0.563\\
    Victoria & 1.302 & 0.053 & 0.512 & \textbf{0.025} & 0.041 & 7.020 & 1.030 & 2.576 & \textbf{0.404} & 2.584\\
    \hline
    \end{tabular}} 
\label{table: tosca_errors}
\end{table*}
\begin{figure*}[htbp]
\begin{center}
\includegraphics[width=0.9\textwidth]{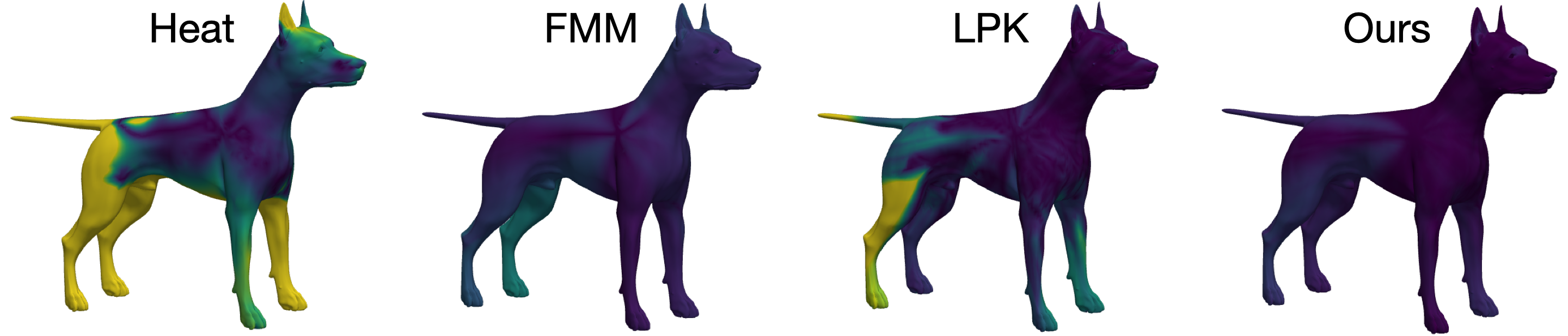}
\end{center}
  \caption{Generalization to arbitrary surfaces. 
  Errors presented for the heat kernel method, fast marching, Lichtenstein et al. and the proposed method. 
  Local errors presented as colors on the surface (brighter color indicates higher error), were computed relative to the polyhedral MMP distances.
  The evaluation was done on the TOSCA data base, whereas our solver and the solver proposed by Lichtenstein et al. were trained with a small number of $2^{nd}$ order polynomial surfaces, the three paraboloid surfaces presented in Table \ref{table: polynomial surfaces}}
\Description{Figure 8. Present errors of the Heat method, FMM, Lichtenstein et al. and our method. Fully described in the text}
\label{fig: tosca_errors}
\end{figure*}
\begin{figure*}[h!]
\begin{center}
         \centering
         \includegraphics[width=0.86\textwidth]{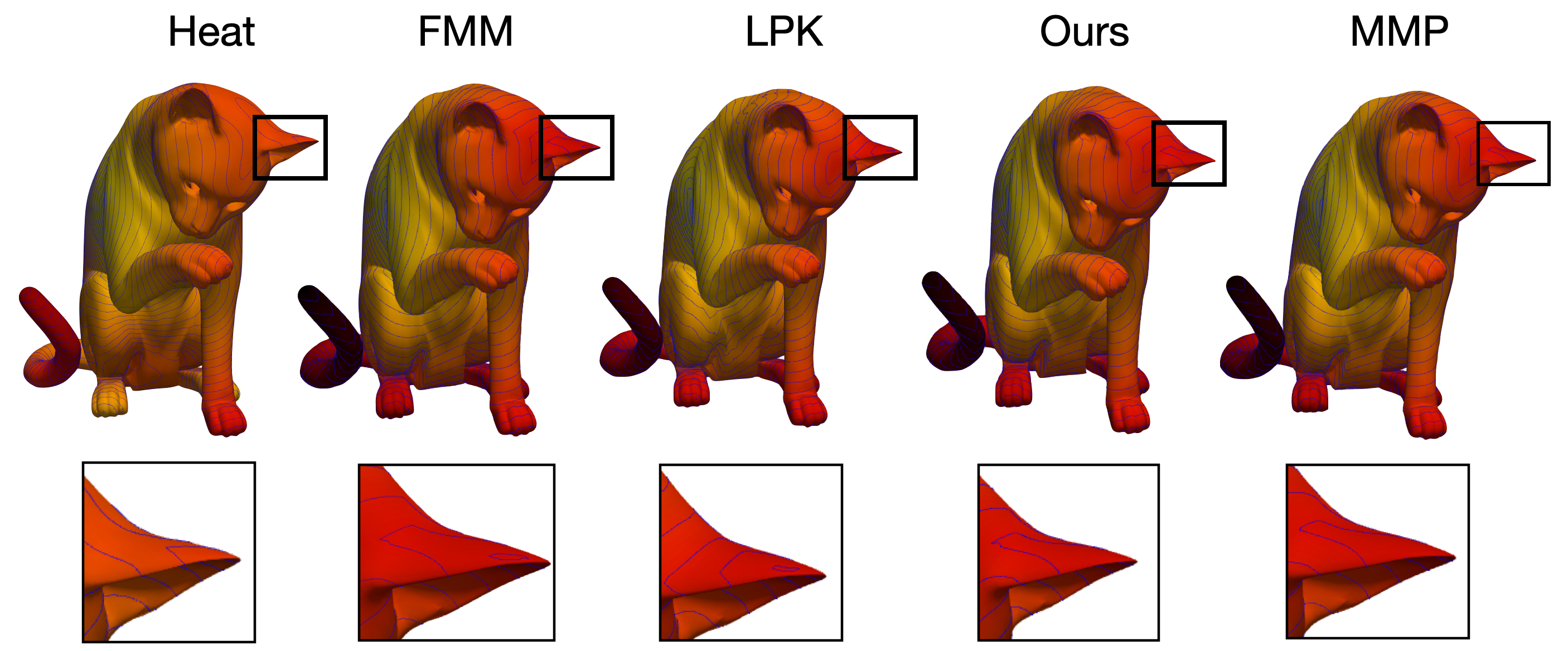}
         \includegraphics[width=0.86\textwidth]{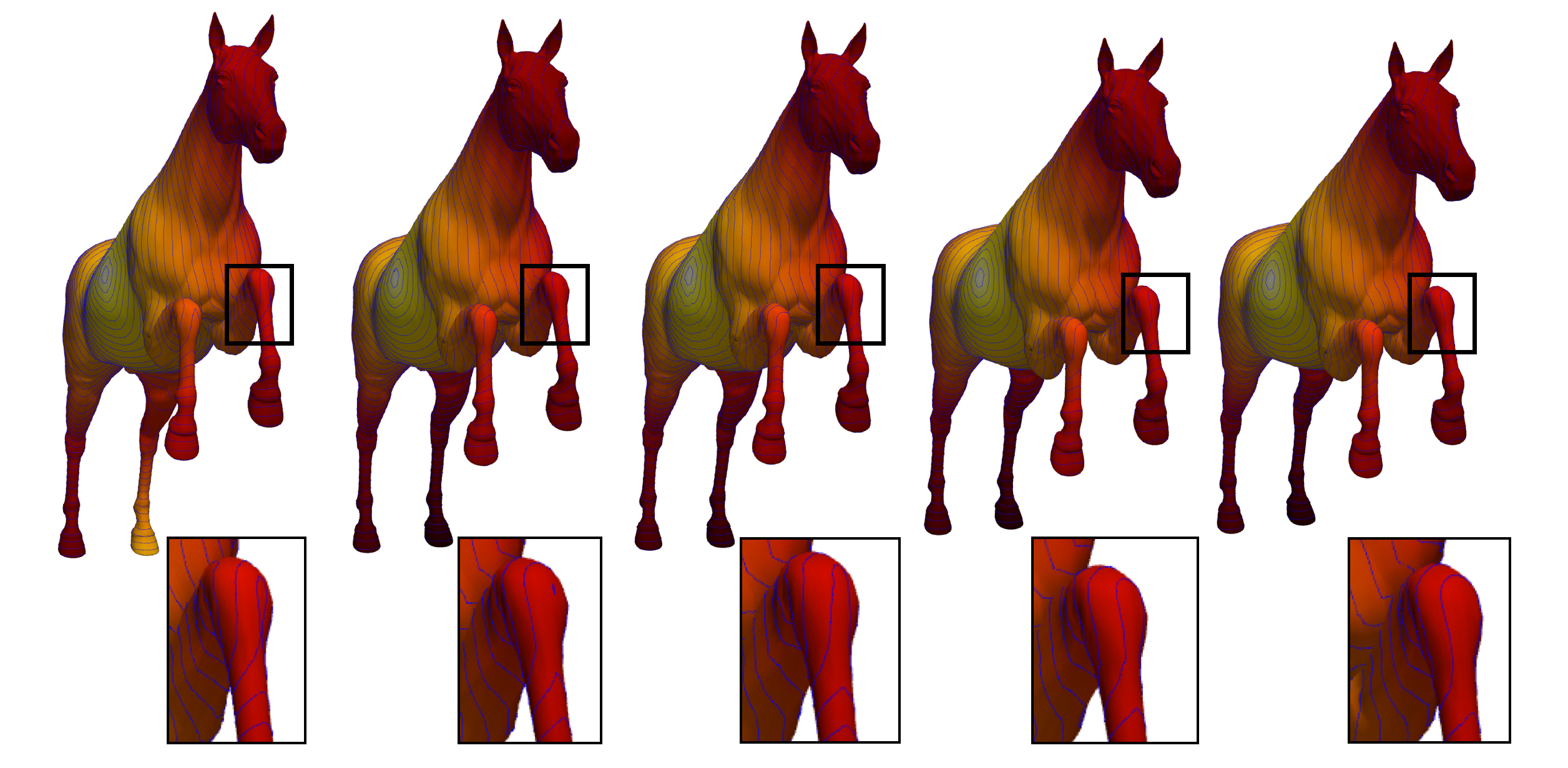}
\end{center}
\caption[Generalization to arbitrary shapes: Iso-contours shown for the heat method, fast marching, Lichtenstein et al., our method and the exact polyhedral scehme, showing the different methods performance on sharp local areas]{Generalization to arbitrary shapes: Iso-contours shown for (left to right) the heat method, fast marching, Lichtenstein et al., our method and the exact polyhedral scehme, calculated by the MMP algorithm. The evaluation was conducted on shapes from TOSCA whereas our solver and the solver proposed by Lichtenstein et al. were trained with only limited number of $2^{nd}$ order polynomial surfaces (the $3$ surfaces presented in Table \ref{table: polynomial surfaces}).
\label{fig: tosca_local_contours}
}
\Description{Figure 9. Present iso-contours on shapes from the Tosca dataset for the Heat method, FMM, Lichtenstein et al., our method and the exact polyhedral distances. Our method shows more accurate iso-contours on sharp edges, even though it was trained on smooth surfaces. Fully described in the text.}
\end{figure*}
\begin{figure*}[htbp]
\begin{center}
\includegraphics[width=1\textwidth]{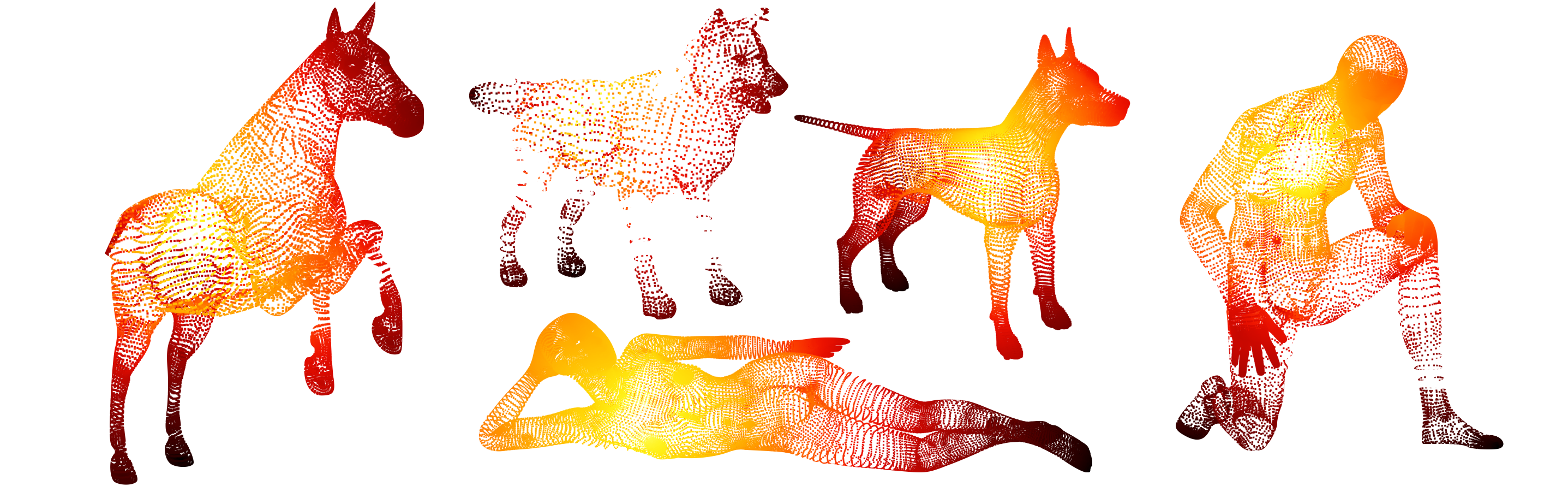}
\end{center}
  \caption[Point clouds: Distance maps computed by  the proposed method.
  The evaluation was done on shapes from the TOSCA dataset, given as point clouds.]
  {Point clouds: Distance maps computed by the proposed method.
  The evaluation was done on shapes from the TOSCA dataset, given as point clouds.
  Our solver was trained using the suggested  bootstrapping framework \ref{generate_gt_section} with a limited number of $2^{nd}$ order polynomial surfaces, represented by polygonal meshes.}
\Description{Figure 10. Point clouds. Geodesic distance computed by our method is plotted on  shapes from the Tosca dataset represented as point clouds. Fully described in the text.}
\label{fig: tosca_pointcloud}
\end{figure*}

\begin{figure}[t]
\begin{center}
    \includegraphics[width=\linewidth]{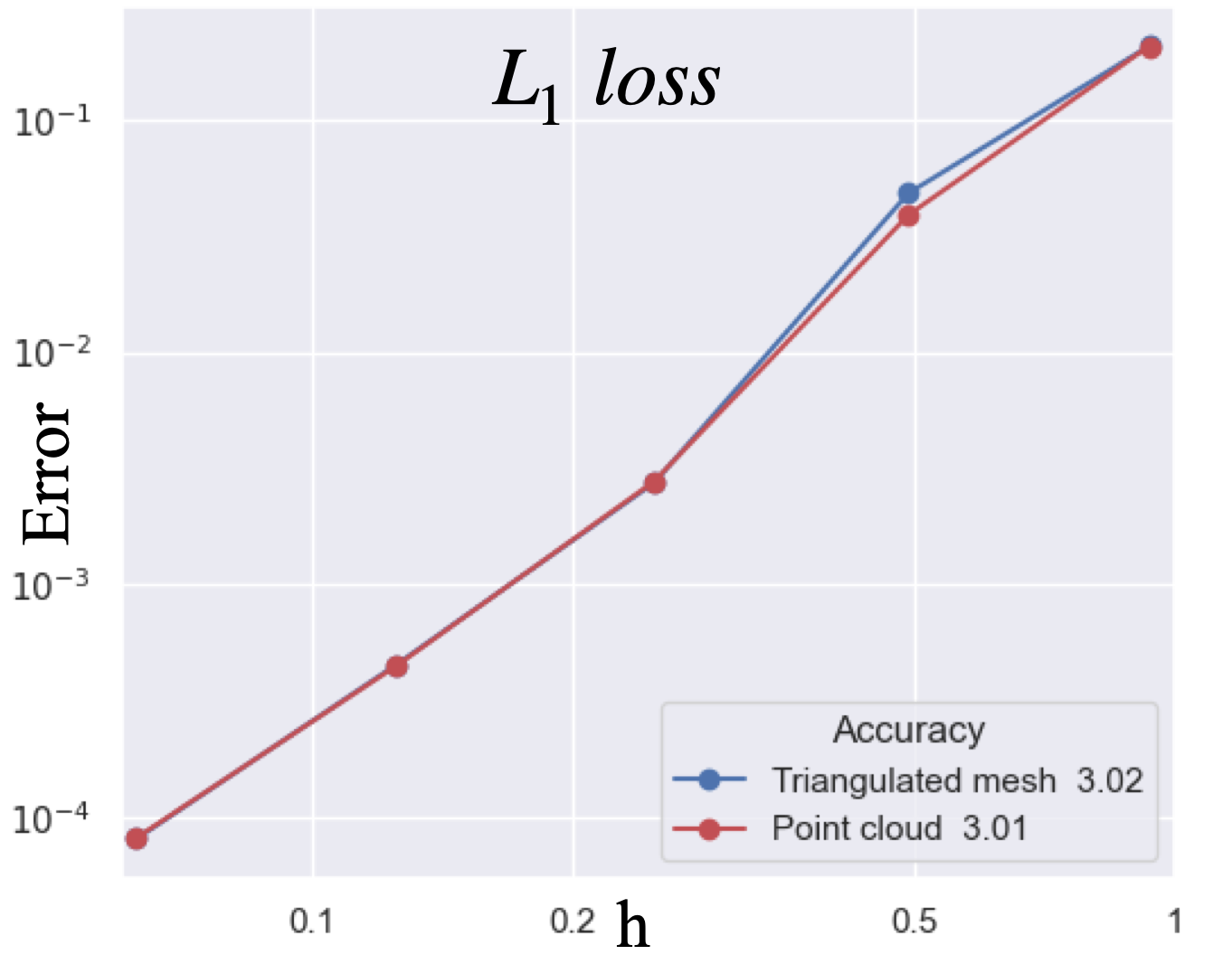}
\end{center}
   \caption{Order of accuracy for point clouds: Plots showing the effect of the edge size of the corresponding mesh or distances between nearby points in the point cloud case, on the errors. 
   When applied to point clouds, the local neighborhood is given by KNN$_{3,6}$ as defined in Subsection \ref{point_cloud_section}.}
   \Description{Figure 11. Numerical evaluation of the order of accuracy shown on spheres. The variation of our method for point clouds shows a similar order of accuracy and errors as the variation of our method for triangulated meshes.}
\label{fig: point_cloud_shperes}
\end{figure}

\begin{figure}[htbp]
\begin{center}
\includegraphics[width=0.9\linewidth]{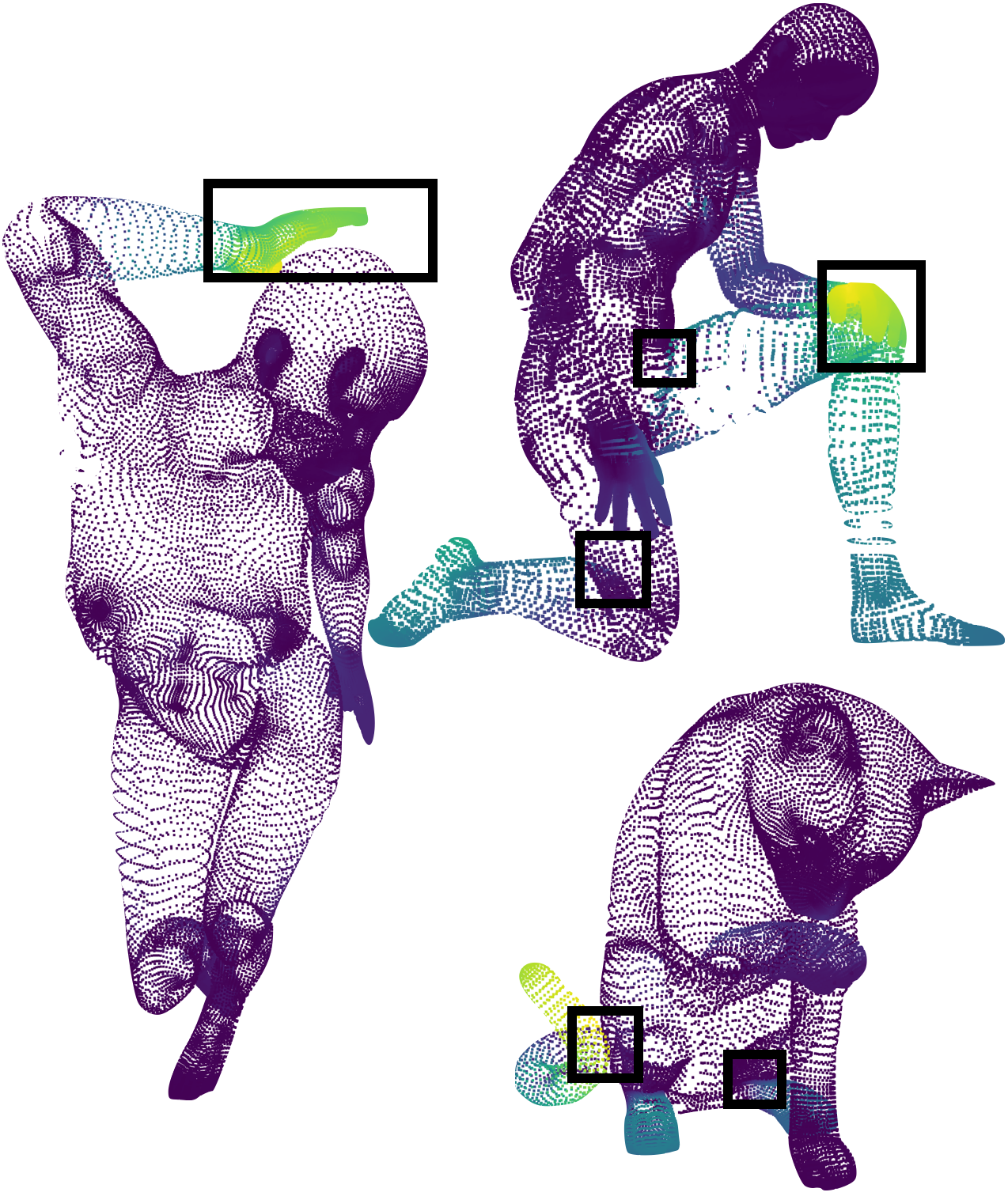}
\end{center}
  \caption[Point clouds and nearest neighbors: 
  Local inaccuracies presented as colors on the surface (brighter color indicates higher error), were computed relative to distances computed on the polyhedral mesh representation of the surfaces.]
  {Point clouds and nearest neighbors: 
  Local inaccuracies presented as colors on the surface (brighter color indicates higher error), were computed relative to distances computed on the polyhedral mesh representation of the surfaces.
    In the point cloud representation, the local neighborhood is defined using the KNN framework, which is based on the Euclidean distance between points in the extrinsic embedding space $\mathbb{R}^3$. 
  This is in contrast to the intrinsic neighborhood defined by edge connectivity in the mesh representation. 
  In the green regions, points from far away parts of the surface are wrongly identified as neighbors by the solver.
  It leads to loss of accuracy in distance approximation in such regions.
   }
   \Description{Figure 12. Point clouds and nearest neighbors. Illustration of errors that occur when defining the local neighborhood using the extrinsic embedding space $\mathbb{R}^3$. Fully described in the text.}
\label{fig: tosca_pointcloud_errors}
\end{figure}

\subsection{Generalization to polynomial surfaces}
We evaluated our method on second order polynomial surfaces.
In general, there is no closed form analytical expression for geodesic distances on such surfaces. 
To train our solver, we generated a wide variety of polynomial surfaces and obtained an accurate approximation of their geodesics for a range of sampling resolutions, as described in Section \ref{generate_gt_section}. 
After obtaining an accurate geodesic distance map, we created a training set of 100,000 examples and trained our model according to Section \ref{train_local_solver_section}.
An evaluation of our method on surfaces from this family is shown in Table \ref{table: polynomial surfaces} and Figure \ref{fig: saddle_errors}.
To confirm that our method is third order accurate on surfaces other than spheres, we performed an additional experiment in which we considered different resolutions of polynomial surfaces, see Figure \ref{fig: 3rd_order_polynom}.
In all the evaluations presented, the ground truth distances were obtained using our bootstrapping method presented in Section \ref{generate_gt_section}. 

\subsection{Generalization to arbitrary surfaces}

To better emphasize the generalization ability of our method, we conduct an additional experiment. 
We train our solver only on the three $2^{nd}$ order polynomial surfaces 
shown in Table \ref{table: polynomial surfaces}, and evaluate it on arbitrary shapes from the TOSCA dataset \cite{bronstein2008numerical}. 
It can be seen in Figure \ref{fig: tosca_errors} and Table \ref{table: tosca_errors}, that our method generalizes well and leads to significantly lower errors compared to the heat method, classical fast marching and the method presented by \cite{lichtenstein2019deep} when trained on the same polynomial surfaces. 
In  Figure \ref{fig: tosca_local_contours}, it can be seen that our method, although trained on simple and smooth surfaces, performs well even in sharp areas such as cat ears and horse knees.
Errors are computed relative to the polyhedral distances, since they are the most accurate distances available to us for these shapes. 
\begin{figure*}[htbp]
\begin{center}
    \includegraphics[width=1\textwidth]{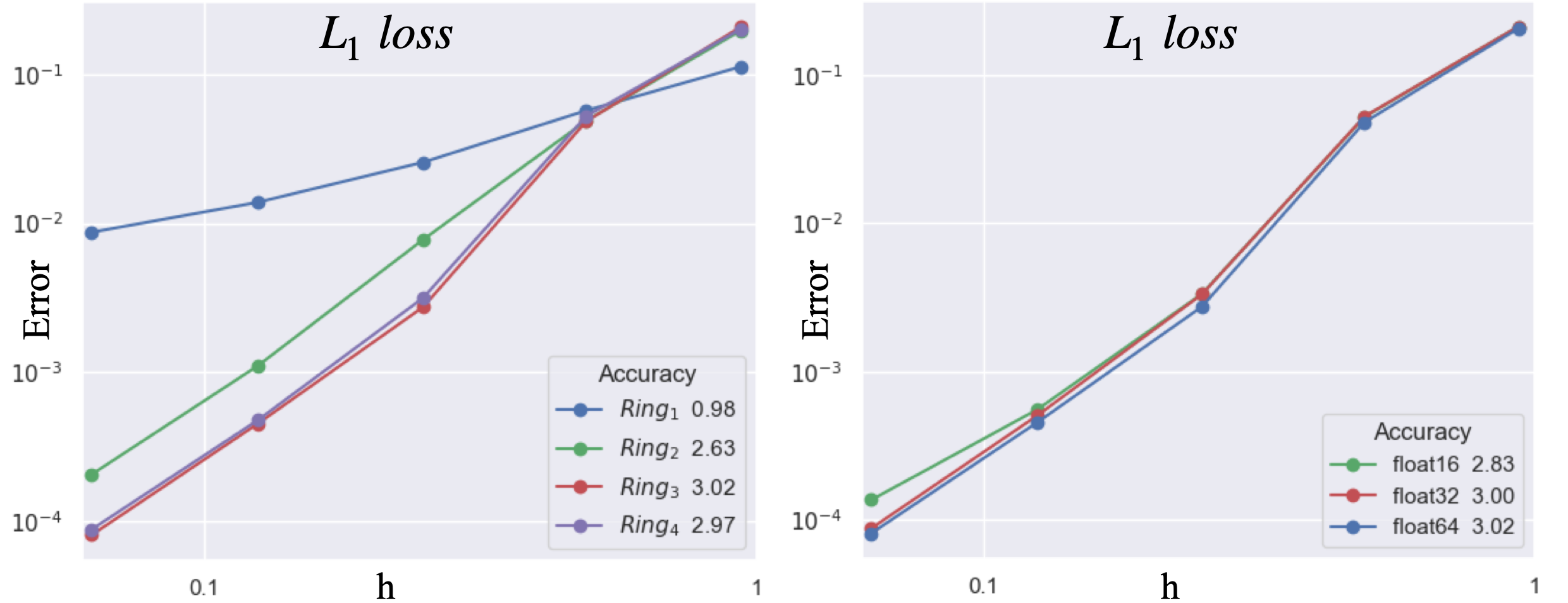}
\end{center}
\caption{Left: {Local neighborhood:} Evaluation of the proposed method on spheres with different local neighborhoods support. 
   Ring$_i$ corresponds to a neighborhood containing all vertices with at most {\it i} edges from the evaluated target.
     Right: Precision floating point representation: Evaluation of our method on spheres with different  floating point precision representation of the neural network weights.
     }
\label{fig: rings_graph}
\Description{Figure 13. Left: Local neighbourhood: numerical evaluation of the order of accuracy of our method for different local neighbourhoods. The order of accuracy for Ring1, Ring2, Ring3, and Ring4 are 0.98, 2.63, 3.02, and 2.97, respectively.
Right: Precision floating point representation: numerical evaluation of the order of accuracy of our method with different floating point representations of the neural network  weight. The order of accuracy for float16, float32, and float64 are 2.83, 3.00, and 3.02, respectively.}
\end{figure*}

\subsection{Generalization to point clouds}
\label{point_cloud_section}
The proposed local solver operates directly on surface points,
 making it well suited for working with point clouds.
To adapt our method for point clouds, we use the same distance update scheme presented in Algorithm \ref{alg:1} and define the local neighborhood with k-nearest neighbors (KNN) in the embedding space, $\mathbb{R}^3$ rather than the graph defined by the triangulation. 
Similar to our definition of Ring$_i$ neighborhood on triangulated meshes, we define local support with KNN$_{i, k}$ in a recursive manner on $i$, when $k$ is the number of neighbors in the KNN algorithm,
\begin{eqnarray}
\label{eq: KNN_recursive}
\text{KNN}_{1,k}(p) &=& \text{KNN}_k(p), \cr 
\text{KNN}_{i+1, k}(p) &=&  \text{KNN}_{i,k}(p) \bigcup \left(\bigcup_{p_j \in \text{KNN}_{i,k}(p)} \text{KNN}_{1, k}(p_j)\right)\,.
\end{eqnarray}
This change applies to two parts of our method when each part is treated separately: (1) updating the {\it Unvisited} neighbors of the last {\it Visited} point to {\it Wavefront}, and (2) defining the local support of our solver.
To update the {\it Unvisited} neighbors of the last {\it Visited} point to {\it Wavefront}, we use KNN$_{1,6}$ (regular KNN with 6-nearest neighbors).
Similarly to our 3-ring neighborhood, we  define the numerical support of our solver on point cloud by a local neighborhood of KNN$_{3,6}$.

Similar to the method proposed for triangulated meshes, we study the order of accuracy of our method for sampled spheres.
We use a local solver trained on meshes and apply it to point clouds representing sampled spheres using the suggested KNN based local support. 
Results are presented in Figure \ref{fig: point_cloud_shperes}, which shows that the proposed method is 3rd order.
In an additional experiment, we evaluate our method on surfaces from the TOSCA dataset, presented in Figure \ref{fig: tosca_pointcloud} and Table \ref{table: tosca_errors}, showing comparable results to the neighborhoods defined by triangulated meshes.

The point cloud representation does not include the connectivity between nearby points.
While the use of Euclidean distance within the KNN framework is sufficient to define  local neighborhood in most surfaces, it could fail in cases where nearest neighbors in the embedding space leads to shortcuts and topological changes of the actual shape. 
This could cause significant degradation in the approximation accuracy.
In order to reduce the effect of potential shortcuts we iterate the  neighborhood structure as described above.
This explains why, in some cases, we prefer to operate with KNN$_{2,6}$ rather than KNN$_{1,6^2+6}$, or KNN$_{3,6}$ rather than, say KNN$_{1,6^3+6^2+6}$.
\section{Ablation study}
To analyze the performance and robustness of our method, we conduct additional tests. 
These include modifying the local numerical support by which neighborhoods are defined and the precision point representation of the network weights.

\subsection{Local numerical support}.
In the fast-marching method, the solver locally estimates a solution to the eikonal equation using a finite-difference approximation of the gradient.
This approximation of the gradient is defined by a local stencil. 
For example, in the case of regularly sampled grids, the one-sided difference formula for a third order approximation requires a stencil with three neighboring points \cite{fornberg1988generation}.
In analogy to the stencil, our method uses a $3^{rd}$ ring neighborhood.
Our local solver does not explicitly solve an eikonal equation nor does it use an approximation of the gradient. 
Yet, the size of the numerical support is the underlying ingredient that allows our neural network to realize high order accuracy.
We have empirically validated this, as depicted in Figure \ref{fig: rings_graph} left.
\subsection{Numerical precision}. 
The choice of numerical representation is an important decision in the implementation of a neural network based solver.
It leads to a trade-off between the accuracy of the solver and its execution time and memory footprint.
In all our experiments, our main focus is on the accuracy of the method.
Hence, we used double precision floating point for our neural network implementation.
Figure \ref{fig: rings_graph} 
 right shows a comparison between our implemented network with different precision, showing only a slight degradation when a single and half precision floating points are used. 

\subsection{Robustness to triangulation }
The proposed local solver is applied directly to the mesh points, and does not use the underlying triangulation. 
However, the numerical support of the solver and the distance updating scheme do depend on the neighborhood graph structure induced by the triangulation.
Therefore, it is important to investigate the robustness of the proposed method to various triangulation methods.
In a less favorable scenario to the proposed method, we would like to explore the influence of ill conditioned triangles during test while training on well conditioned ones.

In the next experiment we train our neural network on regularly sampled spheres and evaluate the resulting solver on arbitrarily triangulated ones that includes ill conditioned triangles.
We show  numerical qualitative evaluation in Table \ref{table: Ill_conditioned_spheres} and Figure \ref{fig: Ill_conditioned_spheres}.
It can be seen that the proposed method is indeed robust to  different triangulations and produces lower errors compared to the heat method, the fast marching, the exact geodesic method, and the  deep-learning method of Lichtenstein et al.
\begin{table}[htbp] 
    \centering
    \captionof{table}{{Robustness to triangulation:} Quantitative evaluation of the ability of the proposed method to handle triangulations other than those for which it was trained for. 
    The evaluation was conducted on randomly uniformly non-regularly sampled spheres 
    whereas our solver and the solver proposed by Lichtenstein et al. were trained only on regularly sampled spheres.
    }
    \scalebox{1}{\begin{tabular}{|l|c|c|c|c|c|}
    \hline
    Error & FMM & Heat method & LPK & MMP & Ours\\
    \hline\hline\hline
    $L_1$ & 0.01804 & 0.01964 & 0.00510 & 0.00128 & \textbf{0.00099}\\
    $L_2$ & 0.01865 & 0.02048 & 0.00672 & 0.00142 & \textbf{0.00127}\\
    $L_\infty$ & 0.0270 & 0.0361 & 0.02278 & \textbf{0.0027} & 0.0041\\
    \hline
    \end{tabular}} 
\label{table: Ill_conditioned_spheres}
\end{table}
\begin{figure}[htbp]
\begin{center}
    \includegraphics[width=1\linewidth]{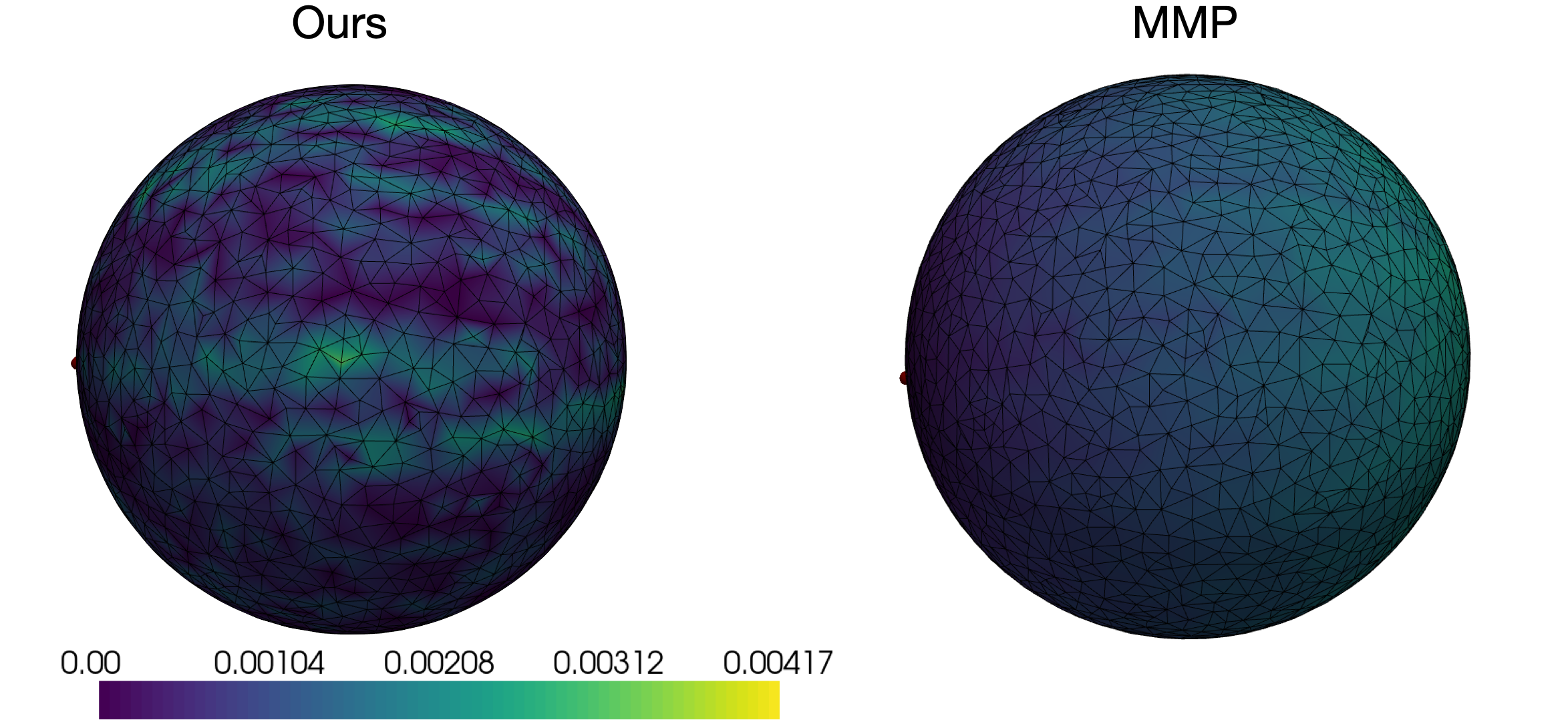}
\end{center}
   \caption[{Robustness to Triangulation:} Errors presented for the ``exact'' MMP scheme and the proposed method]{{Robustness to Triangulation:} Errors presented for the ``exact'' MMP scheme and the proposed method.  Local errors, represented as colors on the surface, were computed relative to the analytical geodesic distances. The proposed method was trained on regularly sampled spheres as can be seen in Figure \ref{fig: attention_map}.
   }
   \Description{Figure 14. Robustness to Triangulation. Visualization of errors on spheres containing ill condition triangulation. The methods presented are the MMP method and our method when trained on regularly sampled spheres. Fully described in the text.}
\label{fig: Ill_conditioned_spheres}
\end{figure}
\begin{table}[htbp] 
    \centering
    \captionof{table}{{Neural Architectures:} Various architectural modifications to the presented neural network \ref{fig:NN schematic}. 
    The residual connections and the different activation functions are defined only for the shared weight MLP, while the fully connected network applied after the pooling operation has LeakyRelu activation with a negative slope of 0.001 in all  variants of our neural network.}
    \scalebox{0.75}{\begin{tabular}{|l|c|c|c|c|c|}
    \hline
    Neural architecture & Residual connections & Activation & negative slope & Pooling\\
    \hline\hline\hline
    Model 1 & \cmark &  LeakyReLu & 0.2 & max\\
    Model 2 & \cmark &  LeakyReLu & 0.2 & avg\\
    Model 3 & \cmark & LeakyReLu & 0.001(default) & max\\
    Model 4 & \cmark & ReLu & \xmark & max\\
    Model 5 & \xmark & ReLu & \xmark & max\\
    LPK & \xmark & ReLu & \xmark & max\\
    \hline
    \end{tabular}} 
\label{table: Neural Architectures}
\end{table}
\begin{figure}[htbp]
\begin{center}
    \includegraphics[width=\linewidth]{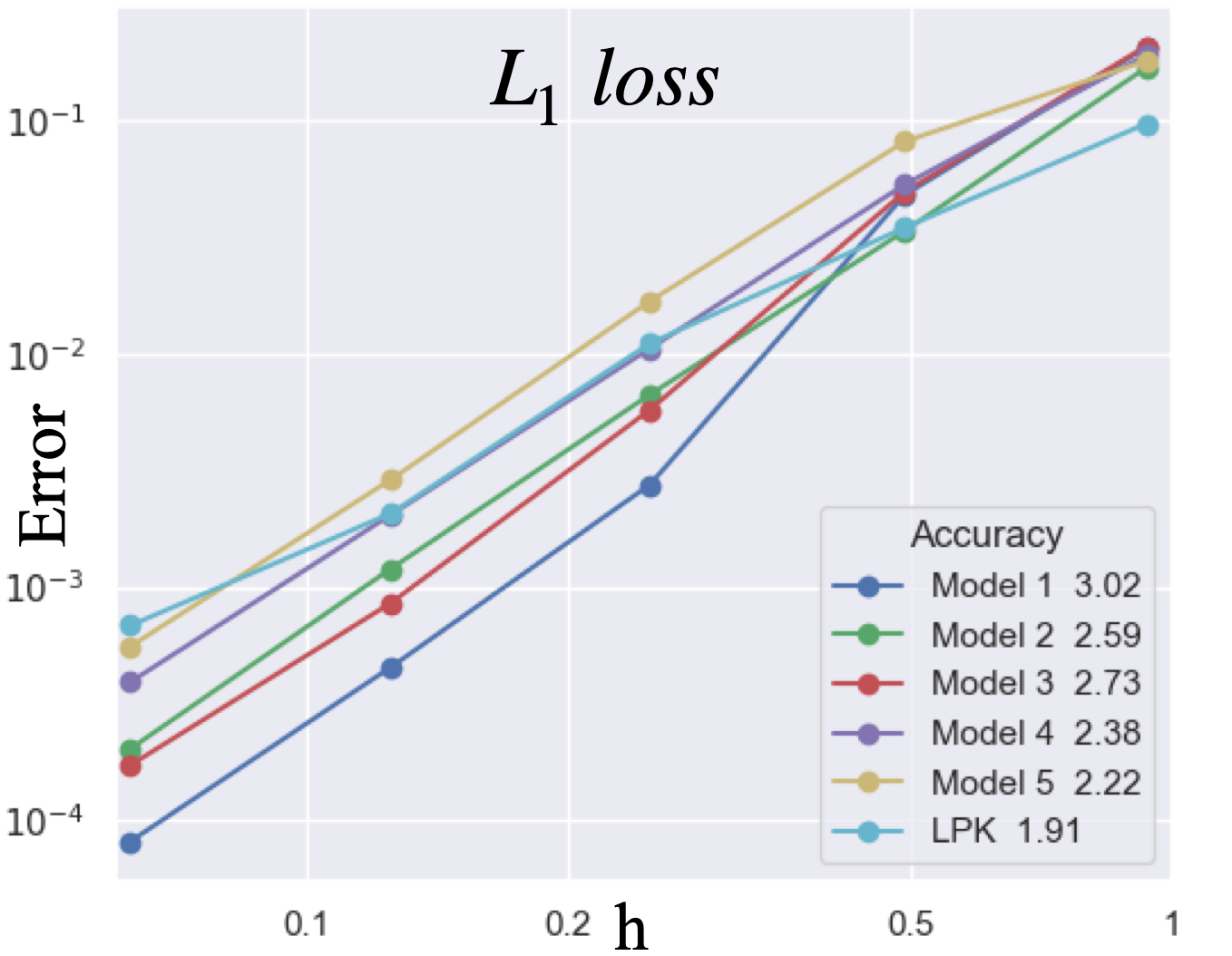}
\end{center}
   \caption[Neural architectures: Evaluation of the effects of different modifications to our neural network.]{Neural architectures: Evaluation of the effects of different modifications to our neural network, presented in \ref{table: Neural Architectures}. 
   All different models use the same local neighborhood support, which includes all vertices with at most three edges from the evaluated target.}
\label{fig: Neural Architectures}
   \Description{Figure 15. Neural architectures: numerical evaluation of the order of accuracy on spheres for different architectures presented in Table \ref{table: Neural Architectures}. The order of accuracy for Model1, Model2, Model3, Model4, and Model5 are 3.02, 2.59, 2.73, 2.38, and 2.22, respectively.}
\end{figure}

\section{Neural network architecture: Engineering considerations}

In this research our goal was to develop  high order accurate methods for computing geodesic distances on surfaces.
To overcome the 2nd order restriction of the polyhedral approximation, we start with the method presented in \cite{lichtenstein2019deep}.
This method uses a neural network based local solver that circumvents the polyhedral representation of the surface and operates directly on the neighboring points.
In the fast-marching method, the local solver estimates a solution to the eikonal equation using a finite-difference approximation of the gradient. 
The stencil size in the finite difference method determines the accuracy of the operator \cite{fornberg1988generation}. 
Our hypotheses is that extending the numerical support of the local solver would improve the overall order of accuracy.

We first considered the neural network introduced in \cite{lichtenstein2019deep} while extending the local support to $3^{rd}$ ring neighborhood.
Such a direct extension did not improve the accuracy of the method, as can be seen in Figure \ref{fig: Neural Architectures}.
Next, we studied the trained model latent vector obtained after the max-pooling operation.
Out of the $1024$ entries of this vector, only 96 were different than zero.

Based on this observation we experimented with  architectural modifications by extending the number of hidden layers, changing the activation functions, and reducing the size of the latent space. 
These modifications are presented in Table \ref{table: Neural Architectures}, where Model 1 is our proposed neural architecture presented in Figure \ref{fig:NN schematic}, and all other models have a similar architecture up to the changes detained in the table.
As can be seen in Figure \ref{fig: Neural Architectures}, the various architectural modifications have a large impact on the performance of the solver.

%


\section{Conclusions}
A fast and accurate method for computing geodesic distances on surfaces was presented. 
Revisiting the method of Lichtenstein et al. \cite{lichtenstein2019deep} we modified the ingredients of a neural network based local solver.
The solver proposed by Lichtenstein et al. was limited to second order accuracy.
The suggested improvements involved extending the local solver's numerical support and redesigning the network's architecture.
Next, we introduced a data generation mechanism that provides accurate high-order distances to train our solver by.
It allowed us to use surfaces for which there is no analytic way to compute geodesic distances.
We trained our solver using examples generated by a novel multi-resolution bootstrapping technique that projects distances computed at high resolutions to lower ones.
We believe that the proposed bootstrapping idea could be utilized for training other numerical solvers while keeping in mind that the numerical support enables the required accuracy.
We trained a neural network to locally extrapolate the values of the distance function on sampled surfaces.
The result is the most accurate method that runs at the lowest (quasi-linear) computational complexity, compared to all existing geodesic distance computation methods. 


\begin{figure}[t]
\begin{center}
    \includegraphics[width=0.7\linewidth]{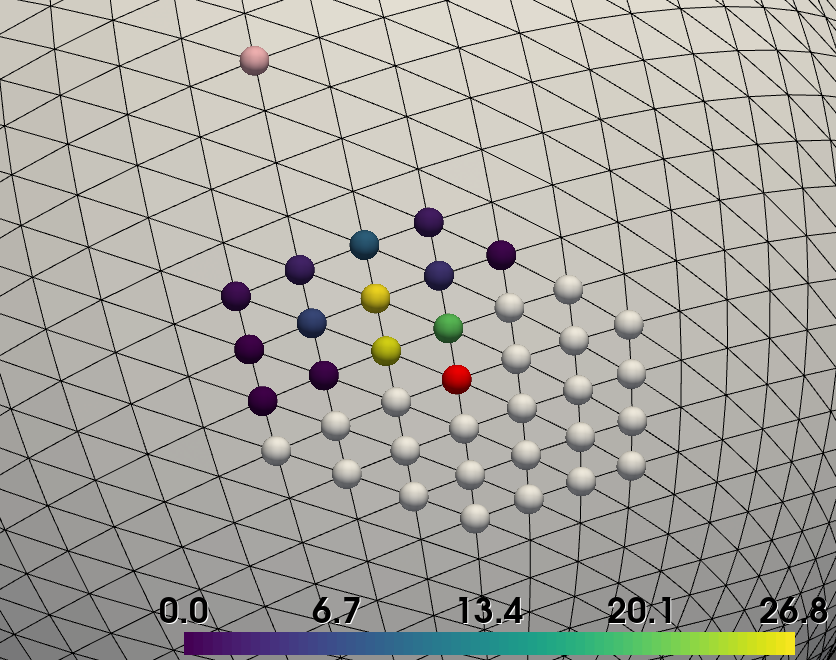}
\end{center}
   \caption[Attention map showing the importance of each neighboring point in the network approximation]{Attention map showing the importance of each neighboring point in the network approximation. 
   The source point is colored pink, the point whose distance is currently evaluated is colored red, and all other plotted points form the current local neighborhood.
   The {\it Visited} neighboring points that are fed into the neural network are colored according to the percentage of the corresponding features in the latent vector obtained from the max-pooling operation. 
   It can be seen that the neural network pays more attention to the neighboring points that are closer to the geodesic path.
   }
\label{fig: attention_map}
\Description{Figure 16. Attention map showing the importance of each neighboring point in the network approximation. Fully described in the text.}
\end{figure}


\begin{acks}
The research was partially supported by the D. Dan and Betty Kahn Michigan-Israel Partnership for Research and Education, run by the Technion Autonomous Systems and Robotics Program.
We are grateful to Mr. Alon Zvirin for all his efforts in improving the writeup and overall presentation of the ideas introduced in this paper.
\end{acks}

\bibliographystyle{ACM-Reference-Format}
\bibliography{mybibliography}


\end{document}